\subjclass{
\ccsdesc[500]{Theory of computation~Problems, reductions and completeness}
}
\keywords{Computational Social Choice, NP-Completeness, Maxflow, Voting, Possible Winner}%mandatory
\title{The Complexity of the Possible Winner Problem over Partitioned Preferences}
\author{Batya Kenig}{Technion}{batyak@cs.technion.ac.il}{}{}%mandatory, please use full name; only 1 author per \author macro; first two parameters are mandatory, other parameters can be empty.
\titlerunning{The complexity of the Possible Winner problem over partitioned preferences}%optional, please use if title is longer than one line
\authorrunning{Batya Kenig}%mandatory. First: Use abbreviated first/middle names. Second (only in severe cases): Use first author plus 'et. al.'
\def\e#1{\emph{#1}}
\def\P{\mathcal{P}}
\def\Q{\mathcal{Q}}
\def\O{\mathcal{O}}
\def\V{\mathcal{V}}
\def\asn{\mathbin{{:}{=}}}
\def\set#1{\mathord{\{#1\}}}
\newcommand{\eat}[1]{}
\newcommand{\algname}[1]{{\sf #1}}
\def\myrulewidth{3.20in}
\def\therule{\rule{\myrulewidth}{0.2pt}}
\newenvironment{insidecode}[3]
{
%\small
\begin{tabular}{p{\myrulewidth}}
%\toprule
\multicolumn{1}{c}{\rule{0mm}{3mm}{\bf #3} $\algname{#1}(\mbox{#2})$\vspace{-0.6em}}\\
\therule\vskip-0.8em\therule
\vspace{-1em}
\begin{algorithmic}[1]}
{\end{algorithmic}
\vskip-0.3em\therule
\end{tabular}}
\def\candidates{\mathcal{C}}
\def\lin{\mathit{lin}}
\def\bX{\boldsymbol{X}}
\def\votes#1#2#3{\mathsf{Votes}_{#3}(#1,#2)}
\def\I{\mathcal{I}}
\def\X{\mathcal{X}}
\def\Y{\mathcal{Y}}
\def\Z{\mathcal{Z}}
\def\S{\mathcal{S}}
\def\s{\mathsf{s}}
\def\V{\mathcal{V}}
\def\Q{\mathcal{Q}}
\def\W{\mathcal{W}}
\def\ol#1{\overline{#1}}
\def\midd#1{\mathbin{|}{#1}}
\def\smax{s^{max}}
\newenvironment{repeatresult}[2]
{\vskip0.5em\par\textsc{#1} #2.\em}
{\vskip1em}
\newenvironment{replemma}[1]{\begin{repeatresult}{Lemma}{#1}}{\end{repeatresult}}
\def\balpha{\boldsymbol{\alpha}}
\def\vote#1{\mathbbm{#1}}
\def\ex3C{\mathrm{\textsc{X3C}}}
\def\threeDM{\mathrm{\textsc{3DM}}}
\def\PW{\mathrm{\textsc{Possible-Winner}}}
\author{
First Author$^1$, 
Second Author$^2$, 
Third Author$^3$, 
\\ 
$^1$ First Affiliation \\
$^2$ Second Affiliation\\
$^3$ Third Affiliation  \\
%
first@email.address,
second@email.address,
third@email.address
}
\begin{document}

\maketitle

\begin{abstract} 
The $\PW$ problem asks, given an election where the voters' preferences over the set of candidates is partially specified, whether a distinguished candidate can become a winner.
In this work, we consider the computational complexity of $\PW$ under the assumption that the voter preferences are \e{partitioned}. That is, we assume that every voter provides a complete order over sets of incomparable candidates (e.g., candidates are ranked by their level of education). We consider elections with partitioned profiles over positional scoring rules, with an unbounded number of candidates, and unweighted voters.
Our first result is a polynomial time algorithm for voting rules with $2$ distinct values, which include the well-known $k$-approval voting rule.
We then go on to prove NP-hardness for a class of rules that contain all voting rules that produce scoring vectors with at least $4$ distinct values.
\eat{
For the remaining set of positional scoring rules, we prove that the $\PW$ problem remains NP-complete for all scoring rules  except 
 vectors of the form $(\underbrace{2,\dots,2}_{k_2},1,\dots,1,\underbrace{0,\dots,0}_{k_0})$ where $k_0$ and $k_2$ are fixed constants such that $k_0+k_2 >2$, for which the complexity remains open. 
}
\end{abstract}

\section{Introduction}\label{sec:intro}
In political elections, web site rankings, and multiagent systems, preferences of different parties (\e{voters}) have to be aggregated to form a joint decision. A general solution to this problem is to have the agents vote over the alternatives. The voting process is conducted as follows: each agent provides a ranking of the possible alternatives (\e{candidates}). Then, a \e{voting rule} takes these rankings as input and produces a set of chosen alternatives (\e{winners}) as output.
However, in many real-life settings one has to deal with \e{partial votes}: Some voters may have preferences over only a subset of the candidates. The $\PW$ problem, introduced by Konczak and Lang~\cite{Konczak05votingprocedures} \eat{\citeauthor{Konczak05votingprocedures}~[\citeyear{Konczak05votingprocedures}]} is defined as follows: Given a partial order for each of the voters, can a distinguished candidate $c$ win for at least one extension of the partial orders to linear ones ?

The answer to the $\PW$ problem depends on the voting rule that is used. In this work we consider \e{positional scoring rules}. A positional scoring rule provides a score value for every position that a candidate may take within a linear order, given as a scoring vector of length $m$ in the case of $m$ candidates. The scores of the candidates are added over all votes and the candidates with the maximal score win. For example, the $k$-approval voting rule, typically used in political elections, defined by $(1,\dots,1,0,\dots,0)$ starting with $k$ ones, enables voters to express their preference for $k$ candidates. Two popular special cases of $k$-approval are \e{plurality}, defined by $(1,0,\dots,0)$, and \e{veto}, defined by $(1,\dots,1,0)$.

The $\PW$ problem has been investigated for many types of voting systems~\cite{Betzler:2009:MCA:1661445.1661455,Lang07winnerdetermination,PINI20111272,DBLP:journals/jair/XiaC11}. For positional scoring rules,
Betzler and Dorn~\cite{DBLP:journals/jcss/BetzlerD10}
proved a result that was just one step away from a full dichotomy for the $\PW$ problem with positional scoring rules, unweighted votes, and any number of candidates. In particular, they showed NP-completeness for all but three scoring rules, namely plurality, veto, and the rule with the scoring vector $(2,1,\dots,1,0)$.  For plurality and veto, they showed that the problem is solvable in polynomial time, but the complexity of $\PW$ remained open for the scoring rule $(2,1,\dots,1,0)$ until it was shown to be NP-complete as well by
Baumeister and Rothe~\cite{DBLP:journals/corr/abs-1108-4436}.
\eat{~\citeauthor{DBLP:journals/corr/abs-1108-4436}~(\citeyear{DBLP:journals/corr/abs-1108-4436}).}

\e{Partitioned preferences} provide a good compromise between complete orders and arbitrary partial orders. 
Intuitively, the user provides a complete order over \e{sets} of incomparable items.
In the machine learning community, partitioned preferences were shown to be common in many real-life datasets, and have been used for learning statistical models on full and partial rankings~\cite{lebanon2008npm,DBLP:journals/jmlr/LuB14,DBLP:journals/jair/HuangKG12}. \eat{In particular, partitioned preferences generalize the known top-$t$, and bottom-$t$ preferences where a user specifies her most (least) preferred $t$ items.}

\eat{
Formally, in a partitioned preference, the set of items $A$ is partitioned into disjoint subsets $A_1,\dots,A_q$ such that (1) for all $i<j\leq q$, if $x \in A_i$ and $y\in A_j$ then $x \succ y$; and  (2) for each $i\leq q$ the alternatives in $A_i$ are incomparable.
}

In many scenarios, the user preferences are inherently partitioned. 
In recommender systems, the items are often partitioned according to their numerical level of desirability~\cite{Sarwar:2001:ICF:371920.372071} (e.g., the common star-rating system, where the scores range between $1$ and $5$ stars). In such a scenario, all items with identical scores are incomparable.
In some e-commerce systems, user preferences are obtained by tracking the various actions users perform~\cite{Koren:2011:OOM:2043932.2043956}. For example, searching or browsing a product is indicative of weak interest. Bookmarking it is indicative of stronger interest, followed by entering the product to the ``shopping cart''. Finally, the strongest indication would be actually purchasing the product. In this case as well, the items are partitioned into groups, where the desirability of each group is determined by its set of associated actions, and items in a common group are considered incomparable.
In the field of information retrieval, \e{learning to rank}~\cite{DBLP:conf/icml/CaoQLTL07,Liu:2009:LRI:1618303.1618304} refers to the process of applying machine learning techniques to rank a set of documents according to their relevance to a given query. In this setting, document scores are indicative of relevance to the query, and documents with identical scores are considered incomparable.

In this work we investigate the computational complexity of the $\PW$ problem with partitioned preference profiles.
Our first result is that determining the possible winner can be performed in polynomial time for $2$-valued voting rules (i.e., that produce scoring vectors with $2$ distinct values), which include the $k$-approval voting rule. We then show that our algorithm also solves the possible winner problem for the $(2,1,\dots,1,0)$ voting rule. These result are surprising because both of these rules are NP-complete when the partitioned assumption is dropped~\cite{DBLP:journals/jcss/BetzlerD10,DBLP:journals/corr/abs-1108-4436}. We then go on and prove hardness for the class of voting rules that produce scoring vectors containing at least $4$ distinct values, and a large class of voting rules with $3$ distinct values. 
The hardness proofs are involved because many of the order restrictions applied in the reductions for the general case are unavailable under the constraint of partitioned preferences.

\eat{
the remaining scoring rules except those that are restricted to three values (i.e., $2$, $1$, and $0$) for which the range of both score values $2$ and $0$ are bounded.
Formally, we prove NP-completeness for all scoring rules, except those that correspond to scoring vectors of the form:
\[
(\underbrace{2,\dots,2}_{k_2},1,\dots,1,\underbrace{0,\dots,0}_{k_0})
\]
where $k_2$ and $k_0$ are both constants such that $k_2+k_0>2$.
}
\section{Preliminaries} \label{section:preliminaries}
In this section
we present some basic notation and terminology that we use throughout
the manuscript.

\subsection{Orders and rankings}
A \e{partially ordered set} is a binary relation $\succ$ over a set of \e{alternatives}, or \e{candidates} $\candidates$ that satisfies transitivity ($a\succ b$ and
$b\succ c$ implies $a\succ c$) and irreflexivity ($a\succ a$ never
holds).  A \e{linear} (or \e{total}) order is a partially ordered set where every two items are comparable. 
We say that a total order $\succ_t$ \e{extends} the partial order $\succ_o$ if, for every pair of alternatives, $a,b$ such that $a\succ_o b$ it also holds that $a\succ_t b$.
We denote by
$\lin(\candidates)$ the set of all linear orders over $\candidates$, and by
$\lin(\candidates\midd{\succ_o})$ the set of linear orders over $\candidates$ that extend $\succ_o$.
In this manuscript we consider a special type of partial order termed \e{partitioned preferences}.
\begin{definition}[Partitioned preferences~\cite{LebanonM08}]\label{def:partitionedPrefs}
	A partial order $\succ_o$ is a \e{partitioned preference} if the set of candidates $\candidates$ can be partitioned into disjoint subsets $A_1,\dots,A_q$ such that: (1) for all $i<j\leq q$, if $c\in A_i$ and $c' \in A_j$ then $c \succ_o c'$; and (2) for each $i \leq q$, candidates in $A_i$ are incomparable under $\succ_o$ (i.e., $a \not\succ_o b$ and $b \not\succ_o a$ for every $a,b \in A_i$).
\end{definition}

\subsection{Elections}
Let $\V=\set{v_1,\dots,v_n}$ be a set of voters, and $\candidates=\set{c_1,\dots,c_m}$ a set of candidates. Every voter $v_i$ has a preference, also denoted $v_i$, which is a linear order or \e{complete vote} over $\candidates$ (i.e., $v_i \in \lin(\candidates)$).  
A tuple of $n$ complete votes $\V=(v_1,\dots,v_n)\in  \lin(\candidates)^n$ is an $n$-voter preference profile.
The set of all preference profiles on $\candidates$ is denoted by $\P(\candidates)$.
A \emph{voting rule} is a function from the set of all profiles on $\candidates$ to the set of nonempty subsets of $\candidates$. Formally $r \colon \P(\candidates) \mapsto 2^{\candidates}\setminus \set{\emptyset}$. For a voting rule $r$, and a preference profile $\V=(v_1,\dots,v_n)$, we say that candidate $c\in \candidates$ \e{wins the election} (or just wins) if $r(\V)=\set{c}$, and \e{co-wins} if $c \in r(\V)$. We denote an election by the triple $\I=(\candidates,\V,r)$.

We now generalize the election to the case where some or all of the votes are partial orders over the candidates. We consider the election $\I'=(\candidates, \O,r)$ where the voter profile $\O=(o_1,\dots,o_n)$ is comprised of partial orders over the candidates. We say that a profile $\V=(v_1,\dots,v_n)$ \e{extends} the profile $\O$ if they have the same cardinality (i.e., $|\O|=|\V|$), and every vote $v_i$ is a linear order that extends the partial order $o_i$ (i.e., $v_i \in \lin(o_i)$).
We say that a partial preference profile $\O=(o_1,\dots,o_n)$ is \e{partitioned} if every one of its preferences is partitioned.
\begin{definition}[$r$-possible winner (co-winner)]
Given an election $\I=(\candidates,\O,r)$ where 
$\O$ is a profile of partial orders over the candidate set $\candidates$, and a distinguished candidate $c\in \candidates$, does there exist an extension $\V$ of $\O$ such that $c=r(\V)$ ($c \in r(\V)$)?
\end{definition}

\subsection{Positional scoring rules}
Let $\I=(\candidates,\V,r)$ denote an election with $m$ candidates and $n$ voters.
A positional scoring rule $r$ is defined by a sequence $(\vec{\balpha}_m)_{m\in \mathbb{N}^+}$ of $m$-dimensional scoring vectors $\vec{\balpha}_m=(\alpha_m,\alpha_{m-1}\dots,\alpha_1)$ where $\alpha_m \geq \alpha_{m-1}\geq \dots \geq \alpha_1$ are positive integers denoted \e{score values}, and $\alpha_m > \alpha_1$ for every $m \in \mathbb{N}^+$. 
A voting rule $r=(\vec{\balpha}_m)_{m\in \mathbb{N}^+}$ is \e{normalized} if for every $m \in \mathbb{N}^+$ there is no integer greater than one that divides all score values in $\vec{\balpha}_m$, and $\alpha_1=0$. Since these assumptions have been shown to be non-restrictive~\cite{HEMASPAANDRA200773,DBLP:journals/jcss/BetzlerD10} we will consider only normalized scoring vectors in this work.
We say that a positional scoring rule $r=(\vec{\balpha}_m)_{m\in \mathbb{N}^+}$ is \e{pure}~\cite{DBLP:journals/jcss/BetzlerD10,dey_et_al:LIPIcs:2017:8135} if for every $m\geq 2$, the scoring vector for $m$ candidates can be obtained from the scoring vector for $m-1$ candidates by inserting an additional score value at an arbitrary position such that the resulting vector meets the monotonicity constraint. 
\eat{We assume that all positional scoring rules $r=(\vec{\balpha}_m)_{m\in \mathbb{N}^+}$ considered in this paper are pure.}We note that for voting rules that are defined for a constant number of candidates, the possible winner problem can be decided in polynomial time~\cite{Conitzer:2007:EFC:1236457.1236461,Walsh:2007:UPE:1619645.1619648}.

Given a complete vote $v\in \V$, and a candidate $c \in \candidates$, we define the \e{score} of $c$ in $v$ by $s(v,c) := \alpha_j$ where $j$ is the position of $c$ in $v$. The score of candidate $c \in \candidates$ in a profile $\V=\set{v_1,\dots,v_n}$ is defined as $s(\V,c)=\sum_{i=1}^ns(v_i,c)$. Whenever the profile $\V$ is clear from the context, we write $s(c)$. A positional scoring rule selects as winners all candidates $c$ with the maximum score $s(c)$.

Some popular examples of positional scoring rules are \emph{Borda}, for which the scoring vector is $\left(m-1,m-2,\dots,0\right)$, \emph{plurality}, for which the scoring vector is $\left(1,0,0,\dots,0\right)$, \emph{veto}, for which the scoring vector is $\left(1,1,1,\dots,1,0\right)$, and $k$-approval $(1\leq k\leq m-1)$, for which the scoring vector is  $\left(\underbrace{1,\dots,1}_k,0,\dots,0\right)$.
We assume that the scoring vector, and thus the scores of the candidates, can be computed in polynomial time given a complete profile.

\section{Summary of Results}
In this manuscript we consider the possible winner problem over partitioned preferences (Definition~\ref{def:partitionedPrefs}). We assume that all positional scoring rules are normalized.

\begin{definition}[$K$-valued voting rule]\label{def:K-valued}
	We say that a positional scoring rule $r=(\vec{\balpha}_m)_{m \in \mathbb{N}^+}$ is \emph{$K$-valued} if there exists a number $n_0 \in \mathbb{N}^+$ such that for all $m \geq n_0$, the score vector $\vec{\balpha}_m$ contains exactly $K$ distinct values.	
\end{definition}
By this definition, the $k$-approval, veto, and plurality voting rules are $2$-valued, while Borda has an unbounded number of different score values.

\begin{definition}[unbounded-value voting rule]\label{def:unbounded-value}
	We say that a positional scoring rule $r=(\vec{\balpha}_m)_{m \in \mathbb{N}^+}$ has an \e{unbounded number of positions with equal score values} if, for every $l \in \mathbb{N}^+$, there exists a number $n_0 \in \mathbb{N}^+$ such that for all $m \geq n_0$, the score vector $\vec{\balpha}_m$ contains at least $l$ consecutive positions $i+l-1,\dots,i$ where $\alpha_{i+l-1}=\dots=\alpha_i$.	
\end{definition}

\def\1thm{
		Let $r=(\vec{\balpha}_m)_{m \in \mathbb{N}^+}$ be a 
	positional scoring rule. Then we have the following when the preference profile is partitioned.
	\begin{enumerate}
		\item If $r$ is $2$-valued or if $r$ is $(2,1,\dots,1,0)$, then the $\PW$ problem over $r$ can be answered in polynomial time. 	
		\item If $r$ produces a scoring vector with at least $4$ distinct values then the $\PW$ problem is NP-complete for $r$.	
		\item If $r$ is $3$-valued, and $r$ produces a size-$m$ scoring vector that is differentiating, or where the number of positions occupied by either $\alpha_m$ or $\alpha_1$ is unbounded, then the $\PW$ problem is NP-complete for $r$.			
	\end{enumerate}
}

\def\2thm{
	Let $r=(\vec{\balpha}_m)_{m \in \mathbb{N}^+}$ be a pure, 
	$3$-valued positional scoring rule. Then we have the following.
	\begin{enumerate}
		\item If $r$ is equivalent to $(2,1,\dots,1,0)$ then the PW problem over $r$ can be answered in polynomial time. 		
		\item If, for every $m\in \mathbb{N}^+$, we have that the number of occurrences in $\vec{\balpha}_m$, for both values, $0$ and $2$, is at least $3$, then the PW problem for $r$ is NP-complete.
	\end{enumerate}
}

\eat{
\begin{definition}
	We say that a voting rule $r=(\vec{\balpha}_m)_{m \in \mathbb{N}^+}$ is $(2,1,\dots,1,0)$ if for every $m \geq 3$ we have that 	$\vec{\balpha}_m=(2,1,\dots,1,0)$.
\end{definition}
}
Let $\I=(\candidates, \O, r)$ denote an election where $\candidates$ is a set of candidates, $r=(\vec{\balpha}_m)_{m \in \mathbb{N}^+}$ is a positional scoring rule, and $\O$ is a partial profile where all of the votes are partitioned. In the rest of the manuscript we show the following.
If $r$ is $2$-valued, or if $r$ is $(2,1,\dots,1,0)$ then we show that the $\PW$ problem over $\I$ can be solved in polynomial time. In particular, this means that the $\PW$ problem is tractable for the $k$-approval voting rule. This result is surprising because it has been shown that when the partitioned assumption is dropped, the problem is intractable for both $k$-approval~\cite{DBLP:journals/jcss/BetzlerD10}, and $(2,1,\dots,1,0)$~\cite{DBLP:journals/corr/abs-1108-4436}.

Our hardness results, proved in Section~\ref{sec:hardness}, cover all scoring rules that produce scoring vectors with at least $4$ distinct values. For $3$-valued scoring rules, we prove hardness for all rules except vectors of the form $(\underbrace{2,\dots,2}_{k_2},1,\dots,1,\underbrace{0,\dots,0}_{k_0})$ where $k_0$ and $k_2$ are fixed constants such that $k_0+k_2 >2$, for which the complexity remains open. 
The main results are summarized in Theorem~\ref{theorem:main1}. A scoring rule is called \e{differentiating}~\cite{dey_et_al:LIPIcs:2017:8135} if it produces a scoring vector $\vec{\balpha}_m$  that contains two positions $i,j \geq 2$ where $j>i+1$ such that $(\alpha_j-\alpha_{j-1})>(\alpha_i-\alpha_{i-1})$.

\begin{theorem}\label{theorem:main1}
	\1thm
\end{theorem}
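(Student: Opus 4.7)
The plan is to prove the three parts of Theorem~\ref{theorem:main1} separately, with Part~1 requiring an algorithmic construction and Parts~2 and~3 each requiring a tailored reduction from an NP-hard problem such as $\ex3C$ or $\threeDM$. For Part~1, the key observation I would exploit is that under a partitioned profile, once we fix the boundary between distinct score values on each voter's ballot, the only remaining flexibility lies within the single partition straddling each such boundary. For a $2$-valued rule (which, after normalization, is essentially $k$-approval), every voter $v$ with partitions $A_1 \succ \cdots \succ A_q$ has a unique boundary partition $A_{i(v)}$ split across the cutoff between positions $k$ and $k+1$; the number of candidates of $A_{i(v)}$ that receive score $1$ is fixed (equal to $k-\sum_{j<i(v)}|A_j|$), but we are free to choose \emph{which} candidates these are. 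I would encode the $\PW$ question as a feasibility max-flow instance in which a source sends to each voter $v$ a flow equal to the number of free approvals to be distributed inside $A_{i(v)}$, and each candidate $c' \neq c$ has a capacity bounding the additional approvals $c'$ can absorb without overtaking the distinguished candidate $c$. Edges voter-to-candidate exist precisely when $c' \in A_{i(v)}$, and $c$ is a possible winner iff the network admits a saturating flow. For the rule $(2,1,\dots,1,0)$ the same idea works with two boundaries: only the top partition (one candidate gets $2$, the rest $1$) and the bottom partition (one candidate gets $0$, the rest $1$) contain free choices, and the flow network has two source structures, one for the ``who gets $2$'' decisions and one for the ``who gets $0$'' decisions.

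For Part~2, I would reduce from $\ex3C$. Given a rule whose scoring vector contains at least four distinct values for all sufficiently large $m$, I would build an election whose candidate set consists of the distinguished candidate $c$, one candidate per element of the universe $U$, one candidate per set $S \in \mathcal{S}$, and dummy candidates to pad partition blocks to the required sizes. For each set $S = \{a,b,d\} \in \mathcal{S}$ I introduce a selector voter whose partition structure places exactly $\{a,b,d\}$ into a designated block, so that in any completion a fixed number of them get pushed into the higher-scored neighbouring block, simulating the choice of whether $S$ belongs to the cover. Balancing voters then ensure that the total score of each element candidate increases by a cover-dependent amount, and $c$ wins iff the selected sets form an exact cover. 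The four distinct score values are what make this work: we need a three-way score gap between ``included'', ``excluded'', and ``dummy'' positions while still leaving room for a separate score level at which $c$ itself sits.

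For Part~3, the strategy is similar but needs extra care because only three distinct score values are available. When $r$ is differentiating, there exist positions where the score jump is strictly larger than at some earlier position, and this irregular spacing can be leveraged to play the role of the fourth value in the Part~2 construction, because a partition straddling the larger jump acts as a high-impact lever relative to one straddling the smaller jump. When the number of positions occupied by $\alpha_m$ or $\alpha_1$ is unbounded, I would instead use the large top or bottom block as padding space into which auxiliary candidates can be hidden without affecting their scores, relaxing the tight bookkeeping of the previous case.

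The main obstacle throughout Parts~2 and~3 is encoding the set-membership choices of $\ex3C$ using only partitioned preferences, since each voter must fully order its blocks and candidates within a block are completely symmetric. The known reductions~\cite{DBLP:journals/jcss/BetzlerD10,DBLP:journals/corr/abs-1108-4436} exploit the ability to prescribe arbitrary pairwise comparisons in an underlying partial order, a power that is unavailable here. My gadgets must therefore move all combinatorial choice into the decision of which candidates are placed into which block, and use the few available score levels to detect bad decisions through forced score imbalances in the balancing voters.
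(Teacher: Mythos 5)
Your plan for Part~1 on $k$-approval is essentially sound and is the mirror image of the paper's construction: the paper routes, for each candidate $c'$ with $\smax(\O,c')>\smax(\O,c)$, the $\smax(\O,c')-\smax(\O,c)$ points it must shed into the votes whose boundary partition overhangs position $k$, whereas you route the free approvals of each vote into the candidates that can absorb them; these are equivalent transportation problems. The gap in Part~1 is your treatment of $(2,1,\dots,1,0)$. The ``two source structures'' for the who-gets-$2$ and who-gets-$0$ decisions are not independent: a candidate that must shed $\Delta$ points may do so by a mixture of dropping from $2$ to $1$ in some votes and from $1$ to $0$ in others, so the two decisions draw on a single per-candidate budget and cannot be solved as two separate flows. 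Moreover, in a vote consisting of a single partition the ``top'' and ``bottom'' boundaries coincide, and a candidate could a priori lose two points in one vote, which breaks any unit-capacity formulation. The paper's resolution is to first fix $c$ at the top of its partition in every vote (Lemma~\ref{lem:fixcscore}) and then prove that every other candidate can lose \emph{at most one} point per vote, after which the single $k$-approval network applies verbatim; your proposal is missing exactly this observation.

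For Parts~2 and~3 the proposal is a template rather than a proof, and it omits the structural case analysis that carries the argument. The paper reduces from $\threeDM$ (not $\ex3C$) and, for a non-differentiating rule, splits on the shape of $\vec{\balpha}_m$: either some interior value $a$ (with $a-1$ immediately below and $a+1$ immediately above) occupies a block of bounded length $k$ while the remaining $m-k$ positions can be tuned to a poly-type function of the instance, in which case a gadget placing $x,y,H_\s,z$ across the positions scoring $a+1,a,\dots,a,a-1$ encodes triple selection (Lemma~\ref{lem:3DMReduction}); or some value occupies an unboundedly long block and another value differs from it by at least $2$, in which case a different gadget with an exact score-accounting (``tightness'') argument is used (Lemma~\ref{lem:3DMReduction2}). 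The hypothesis of at least $4$ distinct values is used \emph{only} to show that one of these two cases must arise; it is not used to create a ``three-way gap between included, excluded, and dummy positions'' as you suggest, and your sketch gives no argument covering, e.g., a vector in which the relevant interior value occupies an unbounded block (where your bounded-block selector gadget cannot be positioned). You also supply no counting argument for the reverse direction of either reduction, which is where the real work lies. Finally, for differentiating $3$-valued rules the paper simply observes that the known reduction of Dey and Misra already uses only partitioned preferences, so no new construction is needed there.
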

\eat{
Theorem~\ref{theorem:main1} covers all cases except $3$-valued scoring rules that produce scoring vectors $\vec{\balpha}_m$ in which both the portion of the vector occupied by $0$, and the portion occupied by $2$, are bounded by fixed constants $B_0$, and $B_2$ respectively, where $B_0+B_2 >2$.
\eat{
We note that in the general case, when the preference profile is not partitioned, the PW problem is NP-complete for both $k$-approval~\cite{DBLP:journals/jcss/BetzlerD10}, and $(2,1,\dots,1,0)$~\cite{DBLP:journals/corr/abs-1108-4436}.
}
}
\section{Tractability}\label{ref:Tractability}
In this section we describe a network flow algorithm that solves the possible winner problem in polynomial time for the $k$-approval and $(2,1,\dots,1,0)$ rules, when the preference profile is partitioned. 
Since we assume that the scoring vectors are normalized, then this algorithm is applicable to all $2$-valued scoring rules.
Some of the proofs in this section are deferred to the appendix.
\paragraph*{Maximal Scores}
Given a partial order $o \in \O$, and a candidate $c \in \candidates$, we denote by $\smax(o,c)$ the maximum score that candidate $c$ can obtain in any linear extension $v$ of $o$. That is, $\smax(o,c):=\max_{v \in \lin(o)}s(v,c)$.
It is straightforward to see that the maximum score of $c$ in any extension of $o$ is determined by the cardinality of the set of candidates that are preferred to it in $o$. That is, \[\smax(o,c)=\vec{\balpha}\left(\left|c' \in \candidates \mid c' \succ_o c \right|+1\right)\]
where $\vec{\balpha}$ is the scoring vector.
We denote by $\smax(\O,c)$ the maximum score that candidate $c$ can obtain in any extension of the partial profile $\O$ to a complete profile. It is straightforward to see that this score can be obtained by maximizing the score for each partial vote independently. Therefore:
\[
\smax(\O,c)=\sum_{i=1}^n\smax(o_i,c)
\]
When the partial profile $\O$ is clear from the context then we refer to this score as $\smax(c)$. 

In many cases it is convenient to \e{fix} the position of the distinguished candidate $c\in \candidates$ in the partial votes $\O$ such that its score is maximized. Formally, let $\O=(o_1,\dots,o_n)$ denote the partial vote. We denote by $\O^c=(o_1^c,\dots,o_n^c)$ the partial profile that is consistent with $\O$, and where the position of $c$ is fixed at the topmost position in each vote. Then:
\[
o_i^c=o_i\cup\set{c \succ c' \mid c'\not\succ_{o_i}c}
\]
In this case, the score of $c$ in any extension $\V$ of $\O^c$ is $s(\V,c)=s(\O^c,c)=\smax(\O,c)$.

\paragraph*{Elections with Partitioned Preferences}
Let $\O=\set{o_1,\dots,o_n}$ be a partitioned partial profile on $\candidates$.
Recall that $\O$ is a partitioned profile if all preferences in $\O$ are partitioned.
Lemma~\ref{lem:fixcscore} below shows that for deciding whether a distinguished candidate $c$ is a possible winner over a profile of partitioned preferences, we may restrict our attention to extensions of the profile $\O^c$ where the position (and score) of $c$ is fixed to the top of its partition. This is not the case when the profile is not limited to partitioned preferences as shown in the following example. The proof of Lemma~\ref{lem:fixcscore} is deferred to the appendix.

\begin{example}
	We consider the election $\I=(\candidates,\O,r)$ where $\candidates=\set{a,b,c,d}$, $\O=\set{o_1,\dots,o_5}$, and $r$ is the positional scoring rule corresponding to the vector $\
	\vec{\balpha}_4=(3,1,1,0)$. We consider the problem of deciding whether candidate $a$ is a possible winner.
	The votes are as follows.
	\begin{center}
		\begin{tabular}{ c|c } 
			$o_1$ & $a\succ b\succ c \succ d$ \\ 
			$o_2$ & $a\succ b\succ c \succ d$ \\ 
			$o_3$ & $b\succ a\succ c \succ d$ \\ 
			$o_4$ & $b\succ a\succ c \succ d$ \\ 
			$o_5$ & $b\succ a$ \\
		\end{tabular}
	\end{center}
	Let $\V_1$ denote an extension of $\O$ in which $v_5=(c\succ b \succ a \succ e)$. For $\V_1$ we have that $s(a,\V_1)=s(b,\V_1)=9$ making $a$ a possible co-winner. Now consider the  extension $\V_2$ in which $v_5=(b \succ a \succ d \succ c)$. For $\V_2$ we have that $s(a,\V_2)=9$, and $s(b,\V_2)=11$. Likewise, in the extension $\V_3$ in which $v_5=(b \succ a \succ c \succ d)$, $b$ is, again, the winner of the election. So we see that despite the fact that $a$ is a possible co-winner in $\I$, it is not the possible co-winner if positioned at its highest ranking position in every vote.
	\eat{So despite the fact that the position of $a$ has increased in $\V_2$, it has lost its winner status.}
\end{example}

\def\lemmafixcscore{
	Let $\I=(\candidates,\O,r)$ denote an election instance where $\O$ is a partitioned profile.
	A distinguished candidate $c \in \candidates$ is a possible winner (co-winner) in $\I$ if and only if it is a possible winner (co-winner) in $\I^c=(\candidates, \O^c,r)$. 
}

\begin{lemma}\label{lem:fixcscore}
	\lemmafixcscore
\end{lemma}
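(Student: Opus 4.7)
The plan is to prove both directions by exploiting the structural simplicity of partitioned preferences. The forward direction is essentially free: every extension of $\O^c$ is also an extension of $\O$, since $\O^c$ is obtained from $\O$ only by adding the relations $c \succ c'$ for candidates $c'$ incomparable to $c$. Thus any linear extension witnessing that $c$ wins (or co-wins) in $\I^c$ already witnesses the same in $\I$.

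For the backward direction, I would start from an extension $\V = (v_1,\dots,v_n)$ of $\O$ in which $c$ wins (or co-wins), and transform it into an extension $\V^c$ of $\O^c$ with the same property. The transformation acts on each partial vote $o_i$ separately: let $A^i$ be the block of the partition of $o_i$ that contains $c$. Any linear extension $v_i$ of $o_i$ must place the candidates of the higher blocks above $A^i$ and the candidates of the lower blocks below $A^i$, leaving only the internal order of $A^i$ to be chosen. Define $v_i^c$ by moving $c$ to the top of the block $A^i$ inside $v_i$, keeping the relative order of all other candidates in $A^i$ unchanged, and leaving positions outside $A^i$ intact. By construction $v_i^c$ is a linear order that satisfies both the original constraints of $o_i$ and the new constraints $c \succ c'$ for all $c' \in A^i$, so $v_i^c \in \lin(o_i^c)$ and hence $\V^c$ is an extension of $\O^c$.

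It remains to compare scores in $\V$ and $\V^c$. In each vote, $c$ either stays in place or moves up, so $s(v_i^c, c) \geq s(v_i, c)$ and hence $s(\V^c, c) \geq s(\V, c)$. For any other candidate $d \neq c$: if $d \notin A^i$ its position is unchanged; if $d \in A^i$ and $d$ was below $c$ in $v_i$, its position is unchanged; if $d \in A^i$ and $d$ was above $c$ in $v_i$, then $d$ is pushed down by exactly one position. Since the scoring vector is monotone nonincreasing in the position index, we obtain $s(v_i^c, d) \leq s(v_i, d)$, and summing over $i$ yields $s(\V^c, d) \leq s(\V, d)$. Combining these two inequalities, if $c$ was a unique winner in $\V$ (respectively a co-winner), the strict (respectively non-strict) maximality of $s(c)$ carries over to $\V^c$, proving the backward direction.

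The step to highlight is the position-shift analysis of the third paragraph: the whole lemma hinges on the fact that within a partitioned profile, moving $c$ upward inside its own block can only demote candidates of that same block by one position, and cannot affect any candidate outside it. This is exactly the property that fails for arbitrary partial orders, which is why the example preceding the lemma is not a counterexample here. I expect this to be the main (and only real) obstacle, since the rest of the argument reduces to bookkeeping about extensions of the added relations $c \succ c'$.
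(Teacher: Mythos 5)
Your proof is correct and follows essentially the same route as the paper's: both directions hinge on $\lin(\O^c)\subseteq\lin(\O)$ for the easy direction and, for the converse, on the fact that the block of a partitioned vote containing $c$ occupies a fixed contiguous range of positions, so $c$ can be promoted to the top of that block while only (weakly) demoting other members of the same block. The only cosmetic difference is that you perform a move-to-front shift within the block where the paper performs a single swap of $c$ with the block's topmost candidate; the score bookkeeping is identical in effect.
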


\subsection{$k$-approval}
Let $\I=(\candidates,\O,r)$ denote an election where $\O$ is a partitioned profile, and $r$ is the $k$-approval voting rule.
%In the general case, when the profile is not necessarily partitioned, the problem was shown to be NP-complete~\cite{DBLP:journals/jair/XiaC11,DBLP:journals/jcss/BetzlerD10,dey_et_al:LIPIcs:2017:8135}.
As a consequence of Lemma~\ref{lem:fixcscore}, when dealing with partitioned preferences, we may restrict our attention to extensions of $\O^c$ where $c$ is positioned at the top of its partition in every vote. Specifically, in every profile $\V^c$ that extends $\O^c$, candidate $c$ gets exactly $\smax(\O,c)$ points.
Now, consider any other candidate $c'\neq c$. 
If $\smax(\O,c')<\smax(\O,c)$ then  we have that
$s(\V^c,c')\leq \smax(\O,c')<\smax(\O,c)=s(\V^c,c)$.
Therefore, $c'$ cannot be a  winner (or co-winner) in any complete profile $\V^c \in \lin(\O^c)$.

Otherwise, if $\smax(\O,c')>\smax(\O,c)$ then $c$ can top $c'$ in $\V^c$ only if $c'$ is ranked in positions $k+1,\dots,m$ (i.e., receive $0$ points) in at least $\smax(\O,c')-\smax(\O,c)$ of the votes in which it could have received a point. Lemma~\ref{lem:cWinnerKApproval} below formalizes this condition. The proof is deferred to the appendix.

\def\kApprovalLemma{
		Let $\I=(\candidates,\O,r)$ be an election instance where $\O$ is a partitioned profile, and $r$ is the $k$-approval voting rule.
		Candidate $c$ is a possible co-winner in $\I$ if and only if there exists a complete profile $\V^c \in \lin(\O^c)$ where every candidate $c'\neq c$ is ranked in positions $\set{k+1,\dots,m}$ in at least $\left(\smax(\O,c')-\smax(\O,c)\right)$  of the votes in which $c'$ can receive a point (i.e., $\set{o_i \in \O \mid \smax(o_i,a)=1}$).	
}
\begin{lemma}\label{lem:cWinnerKApproval}
\kApprovalLemma
\end{lemma}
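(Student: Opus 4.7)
The plan is to reduce to $\I^c$ via Lemma~\ref{lem:fixcscore} and then translate ``$c$ co-wins in $\V^c$'' vote-by-vote into a per-opponent count of how many point-earning opportunities each $c'\ne c$ must forgo. Since every extension $\V^c \in \lin(\O^c)$ places $c$ at the top of its partition in each vote, $c$ scores exactly $\smax(\O,c)$ across $\V^c$, so $c$ is a co-winner if and only if $s(\V^c,c') \leq \smax(\O,c)$ for every $c'\ne c$. Under $k$-approval, $s(\V^c,c')$ equals the number of votes where $c'$ is placed in $\set{1,\dots,k}$, and this quantity is bounded by $|\set{o_i : \smax(o_i,c')=1}|=\smax(\O,c')$.

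For the forward direction I would take a witness $\V^c$ in which $c$ co-wins, so $s(\V^c,c') \leq \smax(\O,c)$. Among the $\smax(\O,c')$ votes where $c'$ could in principle earn a point, at most $\smax(\O,c)$ actually do; hence at least $\smax(\O,c')-\smax(\O,c)$ of them must place $c'$ in $\set{k+1,\dots,m}$, which is precisely the claimed condition. When $\smax(\O,c') \leq \smax(\O,c)$ the bound is non-positive and the condition is vacuous. For the backward direction I would directly show $s(\V^c,c') \leq \smax(\O,c)$ for the hypothesized $\V^c$: votes $o_i$ with $\smax(o_i,c')=0$ contribute $0$ to $s(\V^c,c')$ because promoting $c$ to the top of its partition in $o_i^c$ only adds candidates above $c'$; among the remaining $\smax(\O,c')$ votes the hypothesis forces $c'$ into $\set{k+1,\dots,m}$ in at least $\smax(\O,c')-\smax(\O,c)$ of them, so $c'$ earns at most $\smax(\O,c)=s(\V^c,c)$ points overall.

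The delicate bookkeeping point is that the set $\set{o_i : \smax(o_i,c')=1}$ is defined with respect to the original profile $\O$, whereas the extension lives over $\O^c$; when $c$ and $c'$ originally share a partition, shifting $c$ can strictly decrease $\smax(o_i^c,c')$. I would handle this by working throughout with the original ``earning-eligible'' superset: any shrinkage induced by $\O^c$ only forces additional votes to land $c'$ in $\set{k+1,\dots,m}$, so counts taken over the original superset remain valid in both directions, and the implication reduces to direct manipulation of the inequalities between $\smax(\O,c)$ and $\smax(\O,c')$.
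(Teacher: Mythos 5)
Your proposal is correct and follows essentially the same route as the paper's own proof: reduce to $\O^c$ via Lemma~\ref{lem:fixcscore} so that $s(\V^c,c)=\smax(\O,c)$ is fixed, and then, for each $c'\neq c$, count how many of the votes in $\set{o_i \in \O \mid \smax(o_i,c')=1}$ must rank $c'$ in positions $\set{k+1,\dots,m}$. Your explicit remark that the eligible set is taken with respect to $\O$ rather than $\O^c$ (and that the containment only helps) is exactly the containment $\votes{c'}{1}{\O}\supseteq \votes{c'}{1}{\V^c}$ the paper uses implicitly, so there is no substantive difference.
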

\subsubsection{Network Flow Algorithm}
Let $\I=(\candidates,\O,r)$ be an election where $\O$ is a partitioned profile, $r$ is the $k$-approval voting rule, and $c\in \candidates$ is a distinguished candidate.
We apply Lemma~\ref{lem:cWinnerKApproval} in a maximum network flow algorithm for deciding whether $c$ is a possible co-winner in $\I$.
We begin by describing the network and then prove the correctness of the algorithm.
\paragraph*{Network Description}
The network will contain the following sets of nodes:
\begin{enumerate}
	\item A source node $s$, and sink node $t$.
	\item Candidate nodes $V_{\candidates}$: all candidates $c_i\in \candidates$ for which $\smax(\O,c_i)>\smax(\O,c)$.
	\item Vote nodes $V_\O$: 
	For every vote $o_i\in \O$ where $o_i=(A^i_1\succ \dots \succ A^i_{q_i})$, the network will contain a \e{single} node $o_{i,j}$ where $A^i_j$ ($1\leq j \leq q_i$) is the partition containing the index $k$. For example, in the vote $o_i$ of Figure~\ref{fig:NetworkBuildEx}, the node $o_{i,2}$ represents the second partition $A_2^i$.
	\eat{
	That is:
	\[
	\sum_{l=1}^{j-1}|A_i^{l}|+1\leq k\leq\sum_{l=1}^j|A^i_l|
	\]
	}	
\end{enumerate}
The edges of the network:
\begin{enumerate}
	\item The set of edges $E_{s,\candidates}=\set{(s,c_i)\colon c_i \in V_{\candidates} }$. The capacity of edge $(s,c_i)$ is 	
	$u(s,c_i)=s_{max}(\O,c')-s_{max}(\O,c)$.	
	By construction, the capacity is strictly positive.
	\item A candidate node $a$ will have outgoing edges to all vote nodes $o_{i,j}\in V_\O$ in which it belongs to partition $A_j^i$ (in which it can lose a point). Formally:
	\[
	E_{V_{\candidates},V_{\O}}=\set{(a,o_{i,j}): a\in A^i_j}
	\]
	The capacity of every edge in $E_{V_{\candidates},V_{O}}$ is $1$.
	\item The set of edges $E_{V_\O,t}=\set{(o_{i,j},t)\colon o_{i,j}\in V_\O}$. 
	The capacity of every edge $(o_{i,j},t)$ is set to the number of positions in the partition $A_j^i$ whose corresponding score is $0$. Formally, $u(o_{i,j},t)=\sum^j_{l=1}|A_l^i|-k$.	
	For example, in the vote $o_i$ of Figure~\ref{fig:NetworkBuildEx}, the corresponding edge capacity is $u(o_{i,2},t)=\sum^2_{l=1}|A_l^i|-k$.
\end{enumerate}

\begin{figure}[h]
	\begin{center}
		\includegraphics[width=0.35\textwidth]{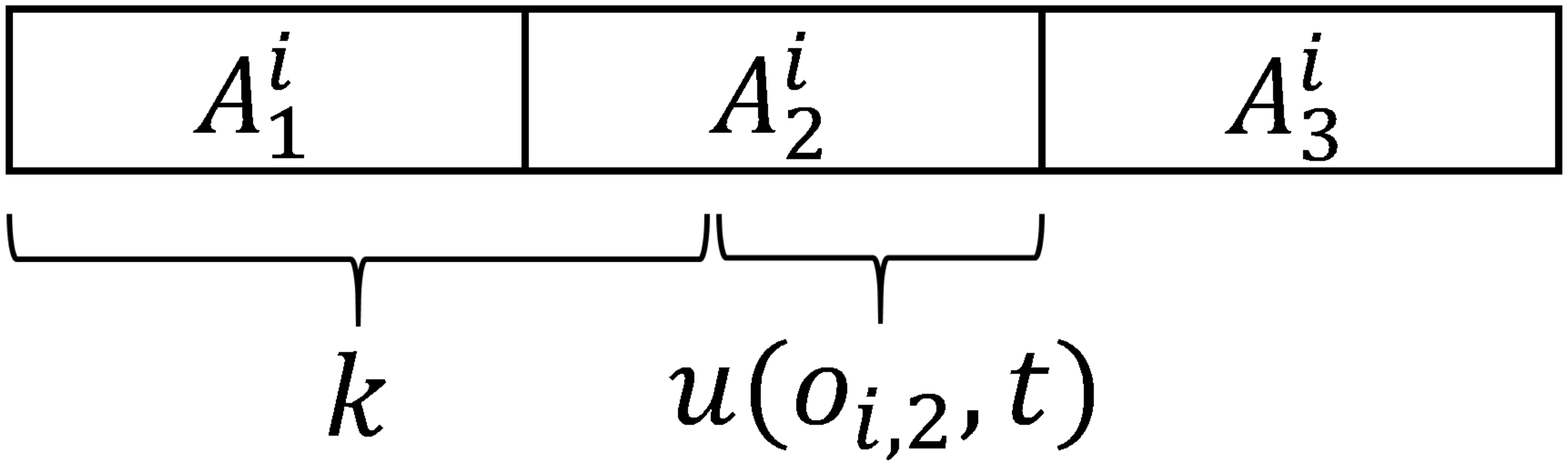}
	\end{center}
	\caption{A partitioned vote $o_i=\set{A_1^i \succ A_2^i \succ A_3^i}$}
	\label{fig:NetworkBuildEx}
\end{figure}

\begin{theorem}
	Let $\I=(\candidates,\O,r)$ be an election where $r$ is  $k$-approval and $\O$ is a partitioned profile.
	A distinguished candidate $c\in \candidates$ is a possible co-winner in $\I$ if and only if the maximum flow in the network is
	\begin{equation}\label{eq:maxFlow}
	\sum_{\substack{\set{a\in \candidates \mid  \smax(\O,a)> \smax(\O,c)}}}\left(\smax(\O,a)-\smax(\O,c)\right)	
	\end{equation}
\end{theorem}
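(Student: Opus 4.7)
The plan is to reduce the theorem to Lemma~\ref{lem:cWinnerKApproval} via an integer max-flow correspondence with what I will call ``demotion schedules.'' By Lemmas~\ref{lem:fixcscore} and~\ref{lem:cWinnerKApproval}, candidate $c$ is a possible co-winner of $\I$ iff there is an extension $\V^c\in\lin(\O^c)$ in which every \emph{heavy} candidate $a$ (one with $\smax(\O,a)>\smax(\O,c)$) is ranked in positions $\{k+1,\dots,m\}$ in at least $\smax(\O,a)-\smax(\O,c)$ of the votes $o_i$ whose pivotal partition $A_j^i$ (the one containing position $k$) contains $a$. Since $\sum_{a\in V_{\candidates}}u(s,a)$ is exactly the expression in~\eqref{eq:maxFlow}, the condition ``the max flow attains that value'' is equivalent to ``every source edge $(s,a)$ is saturated,'' and so it suffices to establish that a demotion schedule exists iff the network admits a saturating integer flow.

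For the forward implication, given such a $\V^c$ I would set $f(a,o_{i,j})=1$ exactly when a heavy candidate $a\in A_j^i$ is demoted by $\V^c$ in vote $o_i$, and then prune the outflow from each heavy $a$ down to exactly $\smax(\O,a)-\smax(\O,c)$ units (Lemma~\ref{lem:cWinnerKApproval} guarantees at least that many $1$-valued edges leave $a$). The sink edges are automatically respected, since $f(o_{i,j},t)$ counts the demoted members of $A_j^i$ in $o_i$, which cannot exceed the number of below-$k$ positions of that partition, namely $|A_j^i|-(k-\sum_{l<j}|A_l^i|)=u(o_{i,j},t)$. Conservation at every intermediate node holds by construction, so the resulting $f$ is a feasible flow saturating every source edge.

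For the converse, integrality of max-flow with integer capacities yields a $\{0,1\}$-valued flow $f$ on the $V_{\candidates}\to V_{\O}$ edges that saturates every $(s,a)$. I would then reconstruct $\V^c$ vote by vote: order the non-pivotal partitions arbitrarily; inside the pivotal partition $A_j^i$, assign the $u(o_{i,j},t)$ below-$k$ positions to the heavy candidates with $f(a,o_{i,j})=1$, padded with arbitrary remaining members of $A_j^i$ to fill the slots, and let the remaining $k-\sum_{l<j}|A_l^i|$ members of $A_j^i$ take the top positions. Incomparability within a partition makes any such ordering a valid extension of $o_i^c$. By saturation, each heavy $a$ is demoted in at least $\smax(\O,a)-\smax(\O,c)$ votes, and Lemma~\ref{lem:cWinnerKApproval} then certifies $c$ as a possible co-winner.

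The only genuinely delicate piece of bookkeeping is aligning the sink-edge budget $u(o_{i,j},t)$ with the number of below-$k$ slots available inside each pivotal partition; once that identity is nailed down, the two directions are mutual inverses of the translation described above and the theorem follows without further technical effort.
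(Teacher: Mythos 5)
Your proof is correct and follows essentially the same route as the paper's: both directions translate between demotions of heavy candidates in an extension of $\O^c$ (as characterized by Lemma~\ref{lem:cWinnerKApproval}) and unit flows on the candidate-to-vote edges, with saturation of the source edges corresponding to the required flow value. If anything, your write-up is slightly more explicit than the paper's about the integrality of the maximum flow and about the identity between the sink capacities $u(o_{i,j},t)$ and the number of below-$k$ slots in each pivotal partition.
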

\begin{proof}
	\textbf{The if direction}.\\
	Suppose that $c$ is a possible winner in $\I$.
	By Lemma~\ref{lem:cWinnerKApproval}, there exists a complete profile $\V^c \in \O^c$ such that every candidate $c'\neq c$ is ranked in positions $\set{k+1,\dots,m}$ in at least $(\smax(\O,c')-\smax(\O,c))$ of the votes $o_i$ in which $\smax(o_i,c')=1$.  That is, in $\V^c=(v_1,\dots,v_n)$, there exist $(\smax(\O,c')-\smax(\O,c))$ votes $v_i\in \V^c$ in which $\smax(o_i,c')=1$ while $s(v_i,c')=0$. Since $r$ is $k$-approval then in every such vote $v_i$, candidate $c'$ is ranked in a position strictly greater than $k$ (but smaller than the index corresponding to its partition in $v_i$). By the way we constructed the network, there exist at least $(\smax(\O,c')-\smax(\O,c))$ nodes $o_{i,l}\in V_\O$ for which there is a directed edge $(c',o_{i,l})$. Pushing a flow of $1$ on these edges, and repeating for every candidate $c'\neq c$ results in the required maximum flow.
	
	\textbf{The only if direction}\\
	So now, assume that we have a maximum network flow~\eqref{eq:maxFlow}, and we show how to construct a profile $\V^c \in \O^c$ in which $c$ is the winner.
	A maximum flow of~\eqref{eq:maxFlow} implies that every candidate node $c_i\in V_{\candidates}$ was able to push all of its incoming flow of $\left(\smax(\O,c_i)-\smax(\O,c)\right)$ to the vote nodes. 
	That is, there exist precisely $\left(\smax(\O,c')-\smax(\O,c)\right)$ nodes $o_{i,j}$ that received a unit of flow from $c'$. In each of the corresponding votes $o_i$, in which candidate $c'$ belongs to partition $A^i_j$, we position candidate $c'$ somewhere in the range of positions $\set{k+1,\dots,\sum_{l=1}^j|A^i_l|}$ where it receives a score of $0$.
	This is possible because given the maximum flow, and according to the capacities assigned to the edges from nodes $V_\O$ to $t$, we know that the number of candidates assigned to these positions in the vote $o_i$ does not exceed the capacity of $u(o_{i,j},t)$. Repeating this procedure for every candidate node $c'\neq c$, and placing the rest of the candidates in arbitrary positions, results in a complete ranking that abides to the conditions of Lemma~\ref{lem:cWinnerKApproval}, making $c$ a possible winner in $\I$.
\end{proof}

\begin{example}\label{ex:NetworkFlow}
	Let $\I=(\candidates,\O,r)$ be an election instance where $r$ is the $2$-approval voting rule, $\candidates=\set{a,b,c,d,e}$, and $\O$ is a partitioned profile defined as follows.
	\begin{center}
		\begin{tabular}{ c|lcl } 
			$o_1$ & $\set{b,c,d,e}$ & $\succ$ & $\set{a}$ \\ 
			$o_2$ & $\set{b,c,d}$ & $\succ$ &$ \set{a,e}$ \\ 
			$o_3$ & $\set{b,e}$ & $ \succ$ & $ \set{a,c,d}$ \\ 
			$o_4$ & $\set{b,d}$ & $\succ$ & $ \set{a,c,e}$ \\ 
			$o_5$ & $\set{c,d,e}$& $ \succ$ & $ \set{a,b}$ \\ 
			$o_6$ & $\set{c}$& $\succ$ & $ \set{a,b,d,e}$ 
		\end{tabular}
	\end{center}
	The table below presents the number of points each candidate $c'$  has to lose (with respect to $\smax(c',\O)$) so that $c$ is the winner.
	\begin{center}
		\begin{tabular}{  c | c   }
			Candidate & $\smax(\O,\cdot)-\smax(\O,c)+1$\\
			\hline
			$a$ & $0$ \\ \hline
			$b$ & $2$ \\ \hline
			$d$ & $2$ \\ \hline
			$e$ & $1$ \\ 
		\end{tabular}
	\end{center}
	The resulting network is presented in Figure~\ref{fig:NetworkEx}.
	The blue edges can carry a capacity of $1$. Bold edges represent a flow that takes up the capacity of the edge.
	The flow presented in the figure may correspond to one or more complete profiles $\V^c \in \O^c$ in which $c$ is the winner.
	\eat{
	Below is an example of such.
		\begin{center}
		\begin{tabular}{ c|c } 
			$v_1$ & $c \succ d \succ b \succ e \succ a$ \\ 
			$v_2$ & $c \succ b \succ d \succ a \succ e$ \\ 
			$v_3$ & $b \succ e \succ a \succ c \succ d$ \\ 
			$v_4$ & $b \succ d \succ a \succ c \succ e$ \\ 
			$v_5$ & $c \succ e \succ d \succ a \succ b$ \\ 
			$v_6$ & $c \succ a \succ d \succ b \succ e$ 
		\end{tabular}
	\end{center}
	One can easily check that $s(\V^c,c)=4$, $s(\V^c,a)=1$, $s(\V^c,b)=3$, $s(\V^c,d)=2$, and $s(\V^c,e)=2$ making $c$ the unique winner.
}
\end{example}

\begin{figure}[h]
	\begin{center}
		\includegraphics[width=0.4\textwidth]{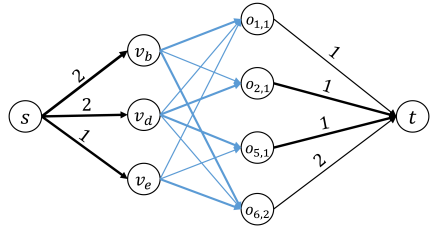}
	\end{center}
	\caption{The network and flow of Example~\ref{ex:NetworkFlow}. Bold edges indicate a flow taking up the full capacity of the edges.}
	\label{fig:NetworkEx}
\end{figure}

\subsection{The Positional Scoring rule $(2,1,1,\dots,1,0)$}
We now consider an election $\I=(\candidates,\O,r)$ where $r$ is the $(2,1,\dots,1,0)$ rule and $\O$ is a partitioned profile. As usual, $c\in \candidates$ is our distinguished candidate.
It has been shown that, in general, the $\PW$ problem for $(2,1,\dots,1,0)$ is NP-complete~\cite{DBLP:journals/corr/abs-1108-4436}. We show that the network flow algorithm of the previous section solves this problem in polynomial time if $\O$ is a partitioned profile.

Let $o_i \in \O$ denote a partitioned vote and $c'\neq c$ a candidate with a maximum score $\smax(o_i,c')$ in $o_i$.
If $o_i$ has two or more partitions then in any extension $v_i \in o_i$ exactly one of the following can occur: (1) $\smax(o_i,c')=s(v_i,c')=2$, (2) $\smax(o_i,c')=2$ and $s(v_i,c')=1$, (3) $\smax(o_i,c')=s(v_i,c')=1$ or (4) $\smax(o_i,c')=1$ and $s(v_i,c')=0$.
In all of these options, candidate $c'$ can lose either 0 or 1 points in $o_i$. Formally, for any candidate $c'\neq c$, and any partitioned vote $o_i$ with at least two partitions we have that $\left(\smax(o_i,c')-s(v_i,v')\right) \in \set{0,1}$.

Now, let us assume that $o_i \in \O$ is a partitioned preference with a single partition. That is, $o_i$ contains no precedence constraints. In this case, by Lemma~\ref{lem:fixcscore}, we can assume that any complete profile $\V$ in which $c$ wins (or co-wins), is an extension of $\O^c$. In particular, this means that we may assume that in $v_i \in \V$, candidate $c$ is ranked in the topmost position and thus receives two points (i.e., $s(v_i,c)=2$). This, in turn, means that for any other candidate $c' \neq c$ exactly one of the following can occur: (1) $\smax(o_i,c')=s(v_i,c')=1$ (2) $\smax(o_i,c')=1$ and $s(v_i,c')=0$. As in the previous case, candidate $c'$ can lose either 0 or 1 points in $o_i$. Formally, $\smax(o_i,c')-s(v_i,v') \in \set{0,1}$.

Now that we have established that every candidate can ``lose'' at most one point in every vote, we can apply the network flow algorithm of the previous section.

\section{Hardness}\label{sec:hardness}
Let $r=(\vec{\balpha}_m)_{m \in \mathbb{N}^+}$ be a pure, positional scoring rule. 
From this point on we assume that $r$ produces a scoring vector with at least $3$ distinct values (the case of $2$-valued scoring rules was considered in the previous section).

\begin{definition}(\cite{dey_et_al:LIPIcs:2017:8135})
	We say that a voting rule $r=(\vec{\balpha}_m)_{m \in \mathbb{N}^+}$ is \e{differentiating} if there exists some constant $n_0 \in \mathbb{N}^+$ such that for all $m \geq n_0$ the score vector $\vec{\balpha}_m$ contains two positions $i,j\geq 2$ where $j > i+1$ such that $(\alpha_{j}-\alpha_{j-1}) > (\alpha_{i}-\alpha_{i-1})$.
\end{definition}
Dey and Misra~\cite{dey_et_al:LIPIcs:2017:8135}
have shown that the possible winner problem is NP-complete for all differentiating scoring rules. The proof (Theorem 6) relies only on partitioned preferences, implying hardness of the $\PW$ problem for differentiating scoring rules with partitioned profiles. Therefore, we restrict our attention to non-differentiating scoring rules. Formally, for every scoring vector $\vec{\balpha}_m$, and for every pair of consecutive values $\alpha_i,\alpha_{i+1}$, we have that $\alpha_{i+1}-\alpha_i \leq 1$.

A common strategy in proving hardness for the PW problem is to construct a profile $\Q$, consisting of a set of linear orders, 
that enables determining the score of every candidate in $\candidates$ according to the requirements dictated by the reductions~\cite{DBLP:journals/jcss/BetzlerD10,dey_et_al:LIPIcs:2017:8135,DBLP:conf/atal/BaumeisterRR11,DBLP:journals/corr/abs-1108-4436}.
\eat{such that the score of every candidate in $\candidates$ can be determined based on the scores that are required by the reductions~\cite{DBLP:journals/jcss/BetzlerD10,dey_et_al:LIPIcs:2017:8135,DBLP:conf/atal/BaumeisterRR11,DBLP:journals/corr/abs-1108-4436}.} Once such a set is constructed, the profile is enhanced with a set of partial votes $\P$, where the maximum scores of the candidates are restricted according to the linear votes in $\Q$. Lemma~\ref{lem:linearVotes} below~\cite{dey_et_al:LIPIcs:2017:8135} states that such a profile $\Q$ can be constructed in polynomial time.

\begin{lemma}[\cite{dey_et_al:LIPIcs:2017:8135}] \label{lem:linearVotes}
	Let $\candidates=\set{c_1,\dots,c_{m}}\cup D$, $(|D|>0)$ be a set of candidates, and $\vec{\balpha}$ a scoring vector of length $|\candidates|$. Then for every integer vector $\bX=(X_1,\dots,X_m)\in \mathbb{Z}^m$, there exists a $\lambda \in \mathbb{N}$ and a voting profile $\Q$ such that $s(c_i,\Q)=\lambda+X_i$ for all $1\leq i \leq m$, and $s(d,\Q)<\lambda$ for all $d\in D$. Moreover, the number of votes in $\Q$ is polynomial in $|\candidates|\cdot\sum_{i=1}^m|X_i|$.
\end{lemma}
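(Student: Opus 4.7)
The plan is to construct $\Q$ explicitly as a disjoint union of a \emph{base} profile $\Q_0$ and a sequence of \emph{shift gadgets}. The base will give every $c_i$ a common score $\lambda_0$ while keeping every $d \in D$ strictly below $\lambda_0$, and each gadget will move exactly one coordinate $s(c_i,\Q)$ by $\pm 1$ while perturbing every other $c_j$ by only a common additive constant $\Delta$ (which will be absorbed into the final $\lambda$) and keeping every $d \in D$ safely low.

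For $\Q_0$ I would take the $m$ cyclic shifts of the linear order $(c_1 \succ c_2 \succ \cdots \succ c_m \succ d_1 \succ d_2 \succ \cdots)$, with the $d$'s pinned in fixed bottom positions throughout. Across these $m$ votes each $c_i$ visits every one of the top $m$ positions exactly once and so receives a common score $\lambda_0$ equal to the sum of the top $m$ entries of $\vec{\balpha}$, whereas each $d \in D$ is fixed at a bottom position and accumulates strictly less. A handful of extra balanced copies, or a single ``separating'' vote that uniformly places all $c_i$'s above all $d$'s, is enough to guarantee strict inequality even when the scoring vector has a degenerate tail.

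For the shift gadget, the key structural fact is that since $\alpha_{|\candidates|} > \alpha_1$ there exist adjacent positions $k^\ast, k^\ast+1$ with $\alpha_{k^\ast} > \alpha_{k^\ast+1}$. I would design a small gadget $G_i^+$ by taking one anchor vote $u_i$ that places $c_i$ at position $k^\ast$ and a designated sink $d^\ast \in D$ at position $k^\ast+1$, then surrounding it with a ``balancing block'' of cyclic shifts of the other $c_j$'s (and an analogous rotation of the other $d$'s at the bottom) chosen so that every $c_j$ with $j \neq i$ picks up exactly the same contribution $\Delta$. The mirror construction yields $G_i^-$. To reach an exact $\pm 1$ shift I would combine several such gadgets for different choices of $k^\ast$: the normalization hypothesis, together with $\alpha_1=0$, forces $\gcd\bigl(\alpha_{k}-\alpha_{k+1}\bigr)=1$, so by B\'ezout some integer combination of the available score-gaps equals $1$, and negative multiplicities are realized by the mirror gadgets $G_i^-$.

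Assembling everything, set
\[
\Q := \Q_0 \;\cup\; \bigcup_{i=1}^m \bigl(|X_i|\text{ copies of } G_i^{\mathrm{sgn}(X_i)}\bigr),
\qquad
\lambda := \lambda_0 + \Bigl(\sum_{j=1}^m |X_j|\Bigr)\Delta .
\]
Then $s(c_i,\Q) = \lambda + X_i$ for every $i$, and every $d \in D$ can be kept below $\lambda$ by padding $\Q_0$ with additional balanced copies (which inflate $\lambda_0$ faster than they lift the $d$'s). The total vote count is $O(m) + O(\mathrm{poly}(|\candidates|)) \cdot \sum_i |X_i|$, polynomial in $|\candidates| \cdot \sum_i |X_i|$ as required. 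The main obstacle is the exact design of the shift gadget: making sure that (i)~every non-target $c_j$ absorbs exactly the same per-gadget increment, (ii)~the net effect on $c_i$ is precisely $\pm 1$ rather than the raw gap $\pm(\alpha_{k^\ast}-\alpha_{k^\ast+1})$, and (iii)~the cumulative shift on each $d \in D$ stays comfortably below $\lambda$. Point~(ii) is the delicate one and is where the normalization assumption genuinely enters the argument.
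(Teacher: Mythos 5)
First, a point of reference: the paper does not prove this lemma at all --- it is imported with a citation to Dey and Misra \cite{dey_et_al:LIPIcs:2017:8135} --- so there is no in-paper proof to compare yours against. Your construction (a balanced base of rotations, plus per-candidate shift gadgets built from an adjacent swap at a position with a strict score gap, combined via B\'ezout over the gaps to realize an exact $\pm 1$) is the standard argument used in that line of work, and the overall plan is sound; your observation that normalization forces $\gcd_k(\alpha_{k}-\alpha_{k+1})=1$ is correct and is indeed where normalization enters.

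Three details need tightening before this is a proof. (i) Your fallback of ``a single separating vote placing all $c_i$'s above all $d$'s'' does not give strict separation when the scoring vector has a constant prefix (for veto such a vote scores every candidate except the very last identically, and in your base profile the topmost pinned dummy exactly ties $\lambda_0$); the clean fix is to rotate the dummies through the bottom $|D|$ positions as well over $m\cdot|D|$ votes and note that the average of the top $m$ scores strictly exceeds the average of the bottom $|D|$ scores unless the whole vector is constant. (ii) The gadget is easiest to make airtight if the anchor is not a separate vote to be ``surrounded'': take all $|\candidates|$ rotations of one order in which $c_i$ and $d^\ast$ are adjacent (so everyone receives $\sum_k\alpha_k$) and swap $c_i$ with $d^\ast$ in the single rotation where they straddle positions $k^\ast,k^\ast+1$; as written, your anchor vote hands unequal scores to the various $c_j$'s and it is not clear the balancing block can compensate exactly. (iii) The B\'ezout multipliers $\mu_k$ with $\sum_k\mu_k(\alpha_k-\alpha_{k+1})=1$ can be as large as the gaps themselves, so your vote count acquires a factor of $\max_k\alpha_k$; the claimed bound ``polynomial in $|\candidates|\cdot\sum_i|X_i|$'' therefore additionally requires the score values to be polynomially bounded in $|\candidates|$ (true for every rule this paper applies the lemma to, but it should be said). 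Also make sure the accounting for $d^\ast$ is done: the gadgets that decrease some $c_i$ push points onto $d^\ast$, so the base profile's separation margin must be inflated by roughly $\sum_i|X_i|$ copies, which you gesture at and which preserves polynomiality. With these repairs the argument goes through.
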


\eat{
Some of our NP-hardness proofs rely on reductions from the NP-complete \algname{Exact Cover By 3-Sets} ($\ex3C$) problem~\cite{Garey:1990:CIG:574848}, while another theorem relies on a reduction from the NP-complete problem \algname{3-Dimensional Matching} ($\threeDM$) ~\cite{Garey:1990:CIG:574848}. 
The $\ex3C$ problem is defined as follows. Given a set of elements $E=\set{e_1,\dots,e_{3M}}$, a family of subsets $\S=\set{S_1,\dots,S_t}$ with $|S_i|=3$ and $S_i\subseteq E$ for $1\leq i\leq t$, it asks whether there is a subset $\S'\subset \S$ such that $|\S'|=M$, and for every element $e_j \in E$
there is exactly one set $S_i\in \S'$ such that $e_j \in S_i$.
}

Our NP-hardness proofs rely on reductions from the NP-complete 3-Dimensional-Matching problem ($\threeDM$)~\cite{Garey:1990:CIG:574848}. 
The $\threeDM$ problem is defined as follows.
We are given three disjoint sets $\X$, $\Y$, and $\Z$ each containing exactly $M$ elements, and a set $\S\subseteq \X \times \Y \times \Z$ of triples. We wish to know whether there is a subset $\S' \subset \S$ of $M$ disjoint triples that covers all elements of $\X \cup \Y \cup \Z$.

In some of our theorems, we will need functions that map each instance $\I$ of $\threeDM$  to a natural number, and in some sense behave like a polynomial. For this sake, we call
\[
f:\set{\I \mid \I \text{ is an instance of }\threeDM }\mapsto \mathbb{N}
\]
a \e{poly-type function for $\threeDM$}~\cite{DBLP:journals/jcss/BetzlerD10} if the function value $f(\I)$ is bounded by a polynomial in $|\I|$ for every input instance $\I$ of $\threeDM$.

\def\3DMReduction{
	A $\threeDM$ instance $\I$ can be reduced to a $\PW$ instance for a scoring rule which produces a size-$m$ scoring vector that fulfills the following. There is an $i \geq 2$ such that $\alpha_{i+k} > \alpha_{i+k-1}=\cdots=\alpha_{i}>0$ with $k\geq 1$, and $m-k=f(\I)$. A suitable poly-type function $f$ for $\threeDM$ can be computed in polynomial time. 
}
\begin{lemma}\label{lem:3DMReduction}
	\3DMReduction
\end{lemma}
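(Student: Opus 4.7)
The plan is to reduce from $\threeDM$. Given an instance $\I=(\X,\Y,\Z,\S)$ with $|\X|=|\Y|=|\Z|=M$ and $|\S|=t$, I would construct a $\PW$ instance with candidate set $\candidates=\set{c}\cup\X\cup\Y\cup\Z\cup D$, where $c$ is the distinguished candidate and $D$ is a pool of dummy candidates padding the total count to $m=f(\I)+k$. A suitable poly-type $f$ will turn out to be linear in $|\X|+|\Y|+|\Z|+|\S|$, capturing the number of non-plateau slots needed by the construction.

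The first building block is the partitioned partial profile $\P$. For every triple $S_j=(x_j,y_j,z_j)\in\S$, I would build a partitioned vote $o_j$ whose central structural feature is a partition containing $\set{x_j,y_j,z_j}$ that straddles the plateau boundary: concretely, when $k\geq 2$, a size-$3$ partition occupying the three consecutive positions $i+k-2$, $i+k-1$, and $i+k$ (a minor variant handles $k=1$). The top partition contains $c$ together with enough dummies so that $c$ is pinned strictly above position $i+k$, and the bottom partition holds the remaining candidates at the lowest positions. By construction, every linear extension of $o_j$ assigns the jump score $\alpha_{i+k}$ to exactly one element of $\set{x_j,y_j,z_j}$, while the other two elements score the plateau value $\alpha_i$.

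The second building block is a set $\Q$ of linear votes produced by Lemma~\ref{lem:linearVotes}. Using $\Q$, I would pin down a target score $K$ for $c$ so that (i) each element $x\in\X\cup\Y\cup\Z$ has total score $\leq K$ iff $x$ receives the jump score $\alpha_{i+k}$ in at most one vote of $\P$, with equality iff this happens exactly once; and (ii) every dummy has aggregate score strictly below $K$. Since Lemma~\ref{lem:linearVotes} produces a $\Q$ of polynomially many votes, the entire reduction runs in polynomial time, and $f$ is indeed poly-type.

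The last and most delicate step is to show that $c$ is a possible co-winner of the constructed instance iff $\I$ is a \textsc{yes}-instance of $\threeDM$. The hard part I anticipate is that the encoding above, on its own, merely enforces that each element is jumped \emph{at most once} across $\P$---a System of Distinct Representatives condition, which is polynomial-time decidable and strictly weaker than $\threeDM$. To tighten the reduction, I would enrich each $o_j$ with an auxiliary per-triple dummy $d_j$ inside its middle partition, and couple $o_j$ to a small group of correlated partitioned votes whose contributions, balanced against $\Q$, force the jumps across the votes associated with $S_j$ to be jointly \emph{all-element} or \emph{all-dummy}. With this coupling in place, any co-winning extension must select $M$ triples whose elements partition $\X\cup\Y\cup\Z$, yielding a $\threeDM$ solution; conversely, any exact cover produces a co-winning extension by jumping the three elements of each selected triple and the auxiliary dummy of each unselected one.
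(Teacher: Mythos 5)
There is a genuine gap, and you have in fact half-diagnosed it yourself. Your core gadget places the triple $\set{x_j,y_j,z_j}$ in a partition that straddles only the \emph{top} end of the plateau (positions $i+k-2,i+k-1,i+k$), so each linear extension hands the jump score $\alpha_{i+k}$ to exactly one of the three elements and the plateau value to the other two. As you note, this only certifies a system-of-distinct-representatives condition, which is polynomial-time decidable and does not encode $\threeDM$. The repair you then sketch --- per-triple auxiliary dummies plus ``a small group of correlated partitioned votes'' that force the jumps to be jointly all-element or all-dummy --- is not a construction: votes are extended independently, the only coupling available is through the global score budget, and you give no gadget showing that such a budget can synchronize choices \emph{across} several partitioned votes. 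Without that gadget the reduction does not go through, and it is far from obvious that one exists within the constraints of partitioned preferences and a scoring vector whose only local structure is $\alpha_{i-1}=a-1$, $\alpha_i=\cdots=\alpha_{i+k-1}=a$, $\alpha_{i+k}=a+1$.

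The paper's proof avoids the need for any cross-vote coupling by making a single partition per triple span the \emph{entire} plateau plus both adjacent positions: the middle block is $\set{x,y}\cup H_\s\cup\set{z}$ with $|H_\s|=k-1$ fresh dummies, occupying positions $i-1$ through $i+k$, i.e.\ touching both the score $a+1$ above the plateau and the score $a-1$ below it. The scores are then calibrated (via Lemma~\ref{lem:linearVotes}) so that each $x\in\X$ sits $2$ points above $c$, each $y\in\Y$ and $z\in\Z$ sits $1$ point below $c$, and each $h\in H$ is tied with $c$. Now $x$ must \emph{lose} exactly $2$ points, which is only possible by dropping from position $i+k$ to the unique position $i-1$ in exactly one vote for a triple containing $x$; in that vote $y$ is forced up to position $i+k$ (gaining $1$, which it can afford exactly once) and $z$ up onto the plateau (gaining $1$, likewise exactly once), while the $H_\s$ candidates must stay put. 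These per-candidate budgets of $-2$, $+1$, $+1$ are what turn the selection of ``activated'' votes into an exact cover, with no need to correlate distinct votes. If you want to salvage your write-up, replace your three-slot partition by this $(k+1)$-slot partition spanning $i-1,\dots,i+k$, introduce the $H_\s$ padding, and rework the score targets accordingly; the all-or-nothing behaviour you were trying to engineer then falls out of a single vote rather than a family of coupled ones.
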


\begin{proof}
Let $a$ denote the value that occupies positions $i,\dots,i+k-1$ in $\vec{\balpha}_m$.
By the previous discussion, and since $r$ is non-differentiating, the scoring vector $\vec{\balpha}_m$ contains three indexes $i$, $i-1$, and $i+k$ such that $\alpha_{i-1}=a-1$, $\alpha_{i}=a$, $\alpha_{i+k} =a+1$.
Schematically:
\begin{equation}\label{eq:schmaticVec}
\vec{\balpha}_m=\left(\dots,\overset{\overset{i+k}{\big\downarrow}}{a+1}, \underbrace{a,\dots,\overset{\overset{i}{\big\downarrow}}{a}}_{k},\overset{\overset{i-1}{\big\downarrow}}{a-1},\dots\right)
\end{equation}

Let $\I=(E,\S)$ denote a $\threeDM$ instance where $E=\X\cup\Y\cup\Z$. The set $\candidates$ of candidates is defined by $\candidates\asn \set{c} \cup E \cup H \cup D$ 
where $c$ denotes the distinguished candidate, $E$ the set of candidates that represent the elements of the $\threeDM$ instance, and $H$ and $D$ contain disjoint candidates such that the following hold.
We define $H=\cup_{\s \in \S}H_\s$ where the sets $H_\s$ are pairwise disjoint, and $|H_\s|=k-1$ for all $\s\in \S$. The sets $H_\s$ will be used for ``padding'' some positions relevant to the construction. The set $D$ contains $m-|E|-|H|-1$ candidates needed to pad irrelevant positions. 
We set $f(\I)=|\candidates\setminus D|-k=|E|+(|\S|-1)(k-1)-1$. Recall that $f(\I)=m-k$. Intuitively, this means that the portion of the scoring vector $\vec{\balpha}_m$, 
occupied by values different from $\alpha_i$, is large enough to contain all elements besides one of the sets $H_\s$.

For every triple $\s=(x,y,z) \in \S$  
let $\candidates_\s \subset (\candidates\setminus \set{x,y,z})$ such that $|\candidates_\s|=i-2$ (see~\eqref{eq:schmaticVec}).
We construct the following linear vote $\vote{v}_\s$.
\[
\vote{v}_\s=\overrightarrow{\left(\candidates\setminus \candidates_\s\right)}\succ x \succ y \succ \overrightarrow{(H_\s)}\succ z \succ \overrightarrow{\left(\candidates_\s\right)}
\]
where $\overrightarrow{\left(\candidates\setminus \candidates_\s\right)}$, $\overrightarrow{(\candidates_\s)}$, and $\overrightarrow{(H_\s)}$ are arbitrary complete orders over the candidate sets $\candidates\setminus \candidates_\s$, $\candidates_\s$, and $H_\s$ respectively.
Using $\vote{v}_\s$ we define the partial \emph{partitioned} vote $\vote{v}'_\s$ as follows.
\[
\vote{v}'_\s=\overrightarrow{\left(\candidates\setminus \candidates_\s\right)}\succ \left(x \cup y \cup H_\s \cup z\right) \succ \overrightarrow{\left(\candidates_\s\right)}
\]
Note that this implies that in any extension of $\vote{v}'_\s$, items $H_\s\cup \s$ will occupy the positions in the range $\set{i-1,\dots,i+k}$.

We denote by $\P=\cup_{\s \in \S}\vote{v}_\s$ and $\P'=\cup_{\s \in \S}\vote{v}'_\s$. 
By Lemma~\ref{lem:linearVotes} there exists a set of linear votes $\Q$, of size polynomial in $m$, where the scores of the candidates in the combined profile $\P \cup \Q$ are as follows:
\begin{align*}
s_{\P\cup\Q}(x)&=s_{\P\cup \Q}(c)+2&\forall x \in \X  \\
s_{\P\cup\Q}(y)&=s_{\P\cup \Q}(c)-1&\forall y \in \Y  \\
s_{\P\cup\Q}(z)&=s_{\P\cup \Q}(c)-1&\forall z \in \Z  \\
s_{\P\cup\Q}(h)&=s_{\P\cup \Q}(c)&\forall h \in H  \\
s_{\P\cup\Q}(d)&<s_{\P\cup\Q}(c)&\forall d \in D
\end{align*}
We observe that the score of $c$ is the same in any extension of $\P'\cup Q$ and is identical to its score in $\P \cup Q$.
We define the instance $\W$ of $\PW$ to be $(\candidates, \P'\cup \Q, r)$, and proceed with the reduction.

In the forward direction, suppose that $\I$ is a $\mathbf{Yes}$ instance of $\threeDM$. Then, there exists a collection of $M$ disjoint sets $\S'\subset \S$ in $\S$ such that $\cup_{\s \in \S}\s=\X\cup \Y \cup \Z$. For every $\s \in \S$ we extend the partial vote $\vote{v}'_\s$ to $\ol{\vote{v}'_\s}$ as follows.
\begin{equation*}
\ol{\vote{v}'_\s}=
\begin{cases}
\overrightarrow{\left(\candidates\setminus \candidates_\s\right)} \succ y \succ \overrightarrow{(H_\s)}\succ z \succ x \succ \overrightarrow{(\candidates_\s)}& \s \in \S'\\
\overrightarrow{\left(\candidates\setminus \candidates_\s\right)} \succ x \succ y \succ \overrightarrow{(H_\s)} \succ z \succ \overrightarrow{(\candidates_\s)}& \s \not\in \S'\\
\end{cases}
\end{equation*}
where, again, $\overrightarrow{(H_\s)}$ is an arbitrary complete order over the candidates $H_\s$.
We consider the extension of $\P'$ to $\ol{\P'}=\cup_{\s \in \S}\ol{\vote{v}'_\s}$. 
We claim that $c$ is a co-winner in the profile $\ol{\P'}\cup \Q$ because:
\begin{enumerate}
	\item  For all $x \in \X$:  $s_{\ol{\P'}\cup \Q}(x)=s_{\P\cup\Q}(x)-2=s_{\P\cup\Q}(c)$
	\item For all $y \in \Y$: $s_{\ol{\P'}\cup \Q}(y)=s_{\P\cup\Q}(y)+1=s_{\P\cup\Q}(c)$
	\item For all $z \in \Z$:
	$s_{\ol{\P'}\cup \Q}(z)=s_{\P\cup\Q}(z)+1=s_{\P\cup\Q}(c)$. 
	\item For all $h \in H_\s$:
	$s_{\ol{\P'}\cup \Q}(h)=s_{\P\cup\Q}(h)=s_{\P\cup\Q}(c)$. 
	\item For all $d \in D$:
	$s_{\ol{\P'}\cup \Q}(d)=s_{\P\cup\Q}(d)<s_{\P\cup\Q}(c)$. 
\end{enumerate}

For the reverse direction, suppose that the $\PW$ instance $\W$ is a $\mathbf{Yes}$ instance. Then there exists an extension of the set of partial, partitioned votes $\P'$ to a set of complete votes $\ol{\P'}$ such that $c$ is a co-winner in $\ol{\P'}\cup \Q$. We refer to the extension of $\vote{v}'_\s \in \P'$ as $\ol{\vote{v}'_\s} \in \ol{\P'}$.

We recall that the score of $c$ is the same in any extension of $\P'\cup \Q$ and is identical to its fixed score in $\P \cup \Q$.
We first claim that for every $\ol{\vote{v}'_\s} \in \ol{\P'}$ in which $x$ is not in position $i+k$ (see~\eqref{eq:schmaticVec}), then $y$ occupies this position. Indeed, if $z$ occupied this position then its score would increase by at least $2$ (i.e., compared to the vote $\vote{v}_\s$). Since $z$ cannot lose points in any of the votes $\vote{v}'_\s$ this would mean that $s_{\ol{\P'}\cup \Q}(z)\geq s_{\P\cup \Q}(z)+2= s_{\P\cup \Q}(c)+1$. But then we arrive at a contradiction that $c$ is a co-winner.
Likewise, if some $h\in H_\s$ occupied this position then its score would increase by at least $1$ because, according to the construction, the position of $h$ is fixed in the rest of the votes, and thus, it cannot lose points in any other vote in $\P'$. Then, $s_{\ol{\P'}\cup \Q}(h)\geq s_{\P\cup \Q}(h)+1=s_{\P\cup \Q}(c)+1$ contradicting the assumption that $c$ is a co-winner. Finally, candidates in $D$ cannot occupy position $i+k$ in any extension of $\P'$.

We now claim that, for every $x \in \X$, there exists exactly one triple $\s \in \S$ such that $x$ is not in position $i+k$ in $\ol{\vote{v}'_\s}$.
Otherwise, by the previous claim, there must be at least two votes in which position $i+k$ is occupied by candidates from $\Y$. 
Since our profile contains more than $M$ votes (i.e., $|\S|>M$)), then by the pigeon-hole principle there exists a candidate $y'\in \Y$ that appears in position $i+k$ at least twice. But then, the overall score of candidate $y'$ must have increased by strictly more than $1$ point.
However, in such a scenario, the score of $y'$ will be strictly more than the score of $c$ contradicting the fact that $c$ is a co-winner in $\ol{\P'}\cup \Q$. 

Now, the claim follows from the observation that every $x\in \X$ must lose $2$ points in order for $c$ to co-win. 
From the claim that there is exactly one vote $\ol{\vote{v}'_s}$ in which $x$ does not occupy position $i+k$ for every $x \in \X$, and since $|\X|=M$, we have that $\ol{\P'}$ contains precisely $M$ votes corresponding to triples $\S'\subset \S$ in which $x$ does not occupy position $i+k$. Furthermore, for every $\s \in \S'$ it must be the case that $x$ loses two points and thus $\ol{\vote{v}'_\s}=\overrightarrow{\left(\candidates\setminus \candidates_\s\right)} \succ y \succ \overrightarrow{(H_\s)}\succ z \succ x \succ \overrightarrow{(\candidates_\s)} $.

We now show that $\cup_{\s \in \S'}=\X \cup \Y \cup \Z$. It is clear that $\cup_{\s \in \S'}\subseteq \X \cup \Y \cup \Z$, so we show that $\cup_{\s \in \S'} \supseteq \X \cup \Y \cup \Z$. Assume the contrary. Then there is a candidate $u \in \Y \cup \Z$ that does not belong to $\cup_{\s \in \S'}$. If $u \in \Y$, then by the claim that in every vote in which $x$ is not in position $i+k$, there is some candidate $y\in \Y$ in this position, and that $|\S'|=M$, there must be some candidate $y' \in \Y$ that appears at least twice in this position. But then $s_{\ol{\P'}\cup \Q}(y') \geq s_{\P\cup \Q}(y')+2=s_{\P\cup \Q}(y')+1$, and we arrive at a contradiction that $c$ is a co-winner.
If $u \in \Z$ then by the same reasoning there is some candidate $z'\in Z$ that appears at least twice in position $i$ (see~\eqref{eq:schmaticVec}). But then $s_{\ol{\P'}\cup \Q}(z') \geq s_{\P\cup \Q}(z')+2=s_{\P\cup \Q}(z')+1$, and we arrive at a contradiction that $c$ is a co-winner. 
\end{proof}

\def\3DMReduction2{
	A $\threeDM$ instance $\I$ can be reduced to a $\PW$ instance for a scoring rule which produces a size-$m$ scoring vector that fulfills the following. There is an $i \geq 1$ such that $\alpha_{i+l-1}=\cdots=\alpha_{i}$ with $l=f(\I)$ and there exists an index $j$ such that $\alpha_j \geq \alpha_i+2$ or $\alpha_j \leq \alpha_i-2$.
	A suitable poly-type function $f$ for $\threeDM$ can be computed in polynomial time. 
}
\begin{lemma}\label{lem:3DMReduction2}
	\3DMReduction2
\end{lemma}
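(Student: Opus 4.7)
The plan is to reduce $\threeDM$ to $\PW$ following the pattern of Lemma~\ref{lem:3DMReduction}, adapted to the new scoring-vector structure. Since $r$ is non-differentiating, I will focus on the case $\alpha_j \ge \alpha_i + 2$ (the case $\alpha_j \le \alpha_i - 2$ is symmetric, with the plateau playing the role of the ``top'' zone). Writing $a \asn \alpha_i$ and taking $j$ to be the smallest index with $\alpha_j \ge a + 2$, non-differentiation forces $\alpha_j = a + 2$ and $\alpha_{j-1} = a + 1$; hence the window $[i, j]$ has a clean three-zone structure with exactly one slot of score $a + 2$ (at position $j$), $j - i - l$ slots of score $a + 1$ (at positions $i + l, \ldots, j - 1$), and the plateau of $l$ slots of score $a$ (at positions $i, \ldots, i + l - 1$).

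For each triple $\s = (x, y, z) \in \S$, I would build a linear vote $\vote{v}_\s$ that places $x$ at position $j$, $y$ at position $j - 1$, a padding set $H_\s^{(1)}$ of size $j - i - l - 1$ at the remaining slots of score $a + 1$, then $z$ at position $i + l - 1$, a padding set $H_\s^{(2)}$ of size $l - 1$ at the remaining plateau slots, with an arbitrary fixed order on the other positions. The partitioned vote $\vote{v}'_\s$ will collapse $\{x, y, z\} \cup H_\s^{(1)} \cup H_\s^{(2)}$ into one partition spanning $[i, j]$, with the $H_\s$ pairwise disjoint across triples. Using Lemma~\ref{lem:linearVotes}, I would add a polynomial-size profile $\Q$ calibrating the scores in $\P \cup \Q$ so that $s(x) = s(c) + 2$ for $x \in \X$, $s(y) = s(c) - 1$ for $y \in \Y$, $s(z) = s(c) - 1$ for $z \in \Z$, $s(h) = s(c)$ for $h \in H$, and $s(d) < s(c)$ for every filler $d$. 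Setting $f(\I) \asn l$ keeps the construction polynomial in $|\I|$.

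The forward direction will be routine: given a matching $\S' \subseteq \S$, for each $\s \in \S'$ I would rearrange the partition so that $y$ takes position $j$, $z$ takes one of the slots of score $a + 1$, and $x$ takes a plateau slot, while $H_\s^{(1)}$ and $H_\s^{(2)}$ remain in their original score zones; for $\s \notin \S'$ I would keep the default order. Each $x, y, z$ will then be activated in exactly one vote and all non-$c$ scores will equalize at $s(c)$.

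The backward direction is where the main obstacle lies. Summing the per-vote identity $\sum_p \Delta_p^\s = 0$ over all $\s$ against the per-candidate upper bounds $\sum_\s \Delta_x \le -2$, $\sum_\s \Delta_y \le 1$, $\sum_\s \Delta_z \le 1$, $\sum_\s \Delta_h \le 0$ (implied by co-winnership), I would first deduce that every bound is tight. Because each $h$ is in only one partition, $\sum_\s \Delta_h = 0$ forces $h$ never to gain; together with $\Delta_z \ge 0$ this rules out $z$ ever reaching the score-$(a+2)$ slot, and pins the $H$-candidates to their original score zones (or one step lower for $H_\s^{(1)}$). Consequently position $j$ will always be occupied by either $x$ or $y$, and the per-vote configurations of $\{x, y, z\}$ will fall into a small finite list. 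The hardest step will then be a global combinatorial argument: classify the votes by their $\{x, y, z\}$-configuration, derive aggregate identities (analogous to $n_2 + n_5 = M$ and $n_3 + 2 n_5 = 2M$) and combine them with the per-candidate balance constraints to show that the ``activated'' votes --- those where $x$ lands on the plateau, $y$ at position $j$, and $z$ at a slot of score $a + 1$ --- must form $M$ pairwise disjoint triples, i.e.\ a $\threeDM$ matching. The main subtlety compared to Lemma~\ref{lem:3DMReduction} will be that the multiple slots of score $a+1$ enlarge the list of feasible per-vote configurations, so ruling out the mixed ones will require this careful balancing.
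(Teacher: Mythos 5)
Your plan tracks the paper's proof of this lemma quite closely: the paper also reduces from $\threeDM$, builds one partitioned vote per triple whose free block offers exactly one slot of score $a+2$, a run of slots of score $a+1$ pinned by dummies, and one plateau slot for $\{x,y,z\}$, calibrates via Lemma~\ref{lem:linearVotes} so that each $x$ must lose $2$ while each $y,z$ may gain at most $1$, and closes the backward direction with a tightness/balance argument. Two structural remarks before the main issue. First, ``setting $f(\I)\asn l$'' is circular: $l$ is \emph{defined} to be $f(\I)$, so you must exhibit the function; the paper takes $f(\I)=3M-2$ and actually uses the long plateau to park the $3M-3$ inactive element-candidates inside each vote, whereas your construction makes no essential use of the hypothesis on the plateau length, which should itself be a warning sign. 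Second, your claim that the $H$-candidates may end up ``one step lower'' is inconsistent with your own tightness deduction, which forces $\Delta_h=0$ exactly.

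The genuine gap is precisely the step you defer, and it cannot be closed as planned: because $a+(a+2)=2(a+1)$, a candidate can trade one visit to the plateau slot against one extra visit to the top slot without changing its total, so the per-candidate balance equations admit non-matching solutions. Concretely, take $M=2$ and $\S=\{(x_1,y_1,z_1),(x_1,y_2,z_1),(x_1,y_1,z_2),(x_2,y_1,z_1)\}$. This is a No-instance of $\threeDM$: $x_2$ and $y_2$ each occur in a single triple, and those two triples share $z_1$. Yet extending the four votes so that the active triples $(x,y,z)$ receive scores $(a{+}1,a{+}2,a)$, $(a{+}1,a{+}2,a)$, $(a{+}2,a,a{+}1)$ and $(a,a{+}2,a{+}1)$ respectively (with all padding candidates kept in their default zones) gives every candidate exactly its target: $x_1$ loses $1+1+0=2$, $x_2$ loses $2$, $y_1$ gains $1-1+1=1$, $y_2$, $z_1$, $z_2$ each gain $1$. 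Hence $c$ co-wins although no matching exists, so no accounting argument of the kind you outline can succeed for this gadget. (The same configurations also contradict the published proof's assertion that $y$ must score $\alpha_{i+1}$ or $\alpha_{i+2}$ in every active vote on pain of violating tightness.) Completing the lemma requires either additional gadgetry that breaks the $\{a,a+2\}$-versus-$\{a+1,a+1\}$ trade, or a reduction from a more rigid source problem; the configuration census you propose is not sufficient.
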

\begin{proof}
	We assume, without loss of generality, that $\alpha_j \geq \alpha_i+2$, the other case (i.e., $\alpha_j \leq \alpha_i-2$) is symmetrical. Schematically, the scoring vector $\vec{\balpha}_m$ is:
	\begin{equation}\label{eq:schmaticVec3DM}
	\left(\dots,\overset{\overset{i+l+k}{\big\downarrow}}{\alpha_{i+2}},\underbrace{\alpha_{i+1},\dots,\overset{\overset{i+l}{\big\downarrow}}{\alpha_{i+1}}}_{k},\underbrace{\alpha_i,\dots,\overset{\overset{i}{\big\downarrow}}{\alpha_i}}_{l=3M-2},\cdots\right)
	\end{equation}
	We define $l=f(\I)=3M-2$ the number of positions carrying the value $\alpha_i$, and
 	let $k$ denote the number of positions carrying the value $\alpha_{i+1}$ (see~\eqref{eq:schmaticVec3DM}).
	
	Let $\I=(E,\S)$ denote a $\threeDM$ instance where $E=\X\cup\Y\cup\Z$. The set $\candidates$ of candidates is defined by $\candidates\asn \set{c} \cup E \cup H \cup D$ 
	where $c$ denotes the distinguished candidate, $E$ the set of candidates that represent the elements of the $\threeDM$ instance, and $H$ and $D$ contain disjoint candidates such that the following hold. 	
	The set $H$ contains dummy candidates that pad all but one of the positions occupied by value $\alpha_{i+1}$. That is, $H=\set{h_1,\dots,h_{k-1}}$. The dummy candidates $D$ pad the positions that are irrelevant to the construction, such that $|D|=m-|E|-|H|-1$.
	
	We build a partial profile $\P\cup Q$ that consists of a set of complete votes $\Q$, and a set of partial, partitioned votes $\P$. The set of partitioned votes $\P$ is defined as follows. For every triple $\s=(x,y,z)\in \S$ we let $\candidates_{\s} \subseteq D\cup\set{c}$ such that $|\candidates_{\s}|=i-1$. The partitioned vote $\vote{p}_\s$ is defined as follows. 
	\[
	\vote{p}_\s=\overrightarrow{(\candidates \setminus \candidates_{\s})} \succ \left(\s\cup H\right) \succ \left(E\setminus \s \right)\succ \overrightarrow{(\candidates_{\s})}
	\]
	where $\overrightarrow{(\candidates \setminus \candidates_{\s})}$, and $\overrightarrow{(\candidates_{\s})}$ denote arbitrary complete orders over the sets of candidates $(\candidates \setminus \candidates_{\s})$ and $\candidates_{\s}$ respectively. We note that $|\s \cup H|=k+2$ and this set occupies positions $i+l-1,\dots,i+l+k$ in $\vote{p}_\s$, and that $|E \setminus \s|=3M-3$, occupying positions $i,\dots,i+l-2$ in  $\vote{p}_\s$.

	Since the position of $c$ is fixed in $\Q\cup \P$ it has the same score in any extension $\Q\cup\overline{\P}$ of $\Q \cup \P$. We denote this score by $s_{\Q \cup \P}(c)$.
	For any candidate $c' \in \candidates \setminus \set{c}$, the  \e{maximum partial  score}  $s_\P^{max}(c')$ is the maximum number of points $c'$ may make in $\P$ without beating $c$ in $\P \cup \Q$. Since the score of $c$ is fixed in $\Q \cup \P$,  then in order for $c$ to win the election we must have that $s_\P^{max}(c')=s_{\Q \cup \P}(c)-s_{\Q}(c')$.

	According to Lemma~\ref{lem:linearVotes}, we can set the score of any candidate $c'\in \candidates \setminus \set{c}$ in $\Q$ such that its maximum partial score $s_\P^{max}(c')$ is as follows.	
	\begin{enumerate}
		\item  For all $h \in H$: 
		 $s^{max}_{\P}(h)=|\S|\cdot \alpha_{i+1}$
		
		\item For all $x\in X$:
		\[
			s^{max}_{\P}(x)=\left( (n_x-1)\cdot \alpha_{i+2} +(|\S|-n_x+1)\alpha_i\right)
		\]
		\item  For all $y \in \Y$: 
		\[
				s^{max}_{\P}(y)=\left( (n_y-1)\cdot \alpha_{i+1} + \alpha_{i+2} +(|\S|-n_y)\alpha_i\right)
		\]
		\item  For all $z \in \Z$: 
		\[
			s^{max}_{\P}(z)=\left( \alpha_{i+1} +(|\S|-1)\alpha_i\right)
		\]
		\item For all $d \in D$:
		$s^{max}_{\P}(d)> |\S|\cdot \alpha_m$.
	\end{enumerate}
where, for any item $e\in E$, $n_e$ denotes the number of occurrences of $e$ in the set of triples $\S$.

In the forward direction, suppose that $\I=(E,\S)$ is a $\mathbf{Yes}$ instance of $\threeDM$. Then, there exists a collection of $M$ sets $\S'\subset \S$ in $\S$ such that $\cup_{\s \in \S}\s=\X\cup \Y \cup \Z$. For every $\s \in \S'$ we extend the partial vote $\vote{p}_\s$ to $\ol{\vote{p}_\s}$ as follows.
\begin{equation*}
\ol{\vote{p}}_\s=
\begin{cases}
\overrightarrow{\left(\candidates\setminus \candidates_\s\right)} \succ y \succ \overrightarrow{(H)}\succ z \succ x \succ \overrightarrow{\left(E\setminus \s \right)} \succ \overrightarrow{(\candidates_\s)}& \s \in \S'\\
\overrightarrow{\left(\candidates\setminus \candidates_\s\right)} \succ x \succ y \succ \overrightarrow{(H)} \succ z \succ \overrightarrow{\left(E\setminus \s \right)} \succ \overrightarrow{(\candidates_\s)} & \s \not\in \S'\\
\end{cases}
\end{equation*}
We consider the extension of $\P$ to $\overline{\P}=\cup_{\s \in \S}\overline{\vote{p}_\s}$. We claim that $c$ is a co-winner in $\Q \cup \overline{\P}$ because:
\begin{enumerate}
	\item For all $h\in H$:
	$s_{\Q \cup \overline{\P}}(h)=|\S|\cdot \alpha_{i+1}$
	\item For all $x\in \X$:
	 $$s_{\Q \cup \overline{\P}}(x)=(n_x-1)\alpha_{i+2}+\alpha_i+(|\S|-n_x)\cdot \alpha_i$$
	 \item For all $y\in \Y$:
	 $$s_{\Q \cup \overline{\P}}(y)=\alpha_{i+2}+(n_y-1)\alpha_{i+1}+(|\S|-n_y)\cdot \alpha_i$$
	 \item For all $z\in \Z$:
	 $$s_{\Q \cup \overline{\P}}(z)=\alpha_{i+1}+(n_z-1)\alpha_{i}+(|\S|-n_z)\cdot \alpha_i$$
	 \item For all $d\in D$:
	 $s_{\Q \cup \overline{\P}}(d)\leq|\S|\alpha_m$
\end{enumerate}
That is, items $E \cup H$ precisely reach their maximum partial scores in $\Q \cup \overline{\P}$, thereby making $c$ a co-winner.

For the reverse direction, we verify a property of the construction called \e{tightness}~\cite{DBLP:journals/jcss/BetzlerD10} that is crucial to the correctness: In total, the score of all positions that must be filled in the $|\S|$ votes equals the sum of the maximum partial scores of all candidates. Once we establish tightness, it follows that a candidate $c'\in \candidates \setminus \set{c}$ cannot earn less than $s_{\P}^{max}(c')$ points since otherwise there must be another candidate $c''\in \candidates \setminus \set{c}$ that makes more than $s_{\P}^{max}(c'')$ points, and thus beats $c$. We now establish tightness with regard to the positions relevant to the construction $\set{i,\dots,i+l+k}$.
We have a total of $|\S|$ votes, and the candidates of $D\cup \set{c}$ are fixed at positions $\set{1,\dots,i-1}$, and $\set{i+l+k+1,\dots,m}$. Therefore, the total number of points for the remaining candidates $H\cup E$ is:
\begin{equation}\label{eq:scorePositions}|\S|\left(\alpha_{i+2}+(|H|+1)\cdot \alpha_{i+1} +\alpha_i + (3M-3)\alpha_i\right)
\end{equation}

We now consider the sum of maximum partial scores of candidates $H \cup E$. For the candidates of $H$:
\[
\sum_{h\in H}s_\P^{max}(h)=\sum_{h\in H}|\S|\cdot\alpha_{i+1}=|H|\cdot|\S|\cdot \alpha_{i+1}
\]
Now, we look at the sum of maximum partial scores of candidates $\X$. For this calculation note that $\sum_{x \in \X}n_x=|\S|$.
\begin{align*}
&\sum_{x\in \X}s_\P^{max}(x)=\sum_{x\in \X}\left((n_x-1)\alpha_{i+2}+(|\S|-n_x+1)\cdot \alpha_i\right)\\
&\underbrace{=}_{|\X|=M}M(|\S|+1)\alpha_i-M\alpha_{i+2}+\sum_{x\in \X}n_x\cdot \alpha_{i+2}-\sum_{x\in \X}n_x\alpha_i\\
&\underbrace{=}_{\sum_{x\in \X}n_x=|\S|}M(|\S|+1)\alpha_i-M\alpha_{i+2}+|\S|\alpha_{i+2}-|\S|\alpha_i\\
&=(|\S|-M)\alpha_{i+2}+M\alpha_i+\alpha_i|\S|(M-1)
\end{align*}
Now, we look at the sum of maximum partial scores of candidates $\Y$.
\begin{align*}
&\sum_{y\in \Y}s_\P^{max}(y)=\sum_{y\in \Y}\left( (n_y-1)\cdot \alpha_{i+1} + \alpha_{i+2} +(|\S|-n_y)\alpha_i\right)\\
&=\sum_{y \in \Y}n_y\cdot \alpha_{i+1}-\sum_{y \in \Y}n_y\cdot \alpha_i -M\alpha_{i+1}+M\alpha_{i+2}+M|\S|\alpha_i\\
&\underbrace{=}_{\sum_{y\in \Y}n_y=|\S|}(|\S|-M)\alpha_{i+1}+M\alpha_{i+2}+\alpha_i|\S|(M-1)
\end{align*}
Finally, we look at the sum of maximum partial scores of candidates $\Z$.
\begin{align*}
&\sum_{z\in \Z}s_\P^{max}(z)=\sum_{z\in \Z}\left( \alpha_{i+1} +(|\S|-1)\alpha_i\right)\\
&=M\alpha_{i+1}+M\alpha_i(|\S|-1)\\
&=M\alpha_{i+1}+\alpha_i|\S|(M-1)+|\S|\alpha_i-M\alpha_i\\
&=(|\S|-M)\alpha_i+M\alpha_{i+1}+\alpha_i|\S|(M-1)
\end{align*}
Adding up $\sum_{h\in H}s_\P^{max}(h)+\sum_{e\in E}s_\P^{max}(e)$ we get exactly the total score of all positions that must be filled (see~\eqref{eq:scorePositions}) with items $H\cup E$. Thus, tightness follows.

Now that we have established tightness,  suppose that the $\PW$ instance $\W=(\candidates,\P,r)$ is a $\mathbf{Yes}$ instance. Then there exists an extension of the set of partial, partitioned votes $\P$ to a set of complete votes $\ol{\P}$ such that $c$ is a co-winner in $\ol{\P}\cup \Q$. We refer to the extension of $\vote{p}_\s \in \P$ as $\ol{\vote{p}_\s} \in \ol{\P}$.

Before we continue we will require the following claim. For any $\s\in \S$ the item that occupies position $i+l+k$ in $\ol{\vote{p}_\s}\in \ol{P}$ is in $\X \cup \Y$.
The proof is as follows. Since $c$ is a co-winner in $\ol{P}$ then, by tightness, the total score of the items in $\X \cup \Y$ in $\ol{P}$ is:
\begin{align}
&\sum_{x \in \X}s^{max}_{\P}(x)+\sum_{y \in \Y}s^{max}_{\P}(y)= \nonumber
\\ 
&|\S|\alpha_{i+2}+M\alpha_i+(|\S|-M)\alpha_{i+1}+|\S|(2M-2)\alpha_i \label{eq:claim}
\end{align}
Since each vote provides the candidates of $\X \cup \Y$ with one of $\alpha_i,\alpha_{i+1}$, or $\alpha_{i+2}$ points,
then even a single vote, in which position  $i+l+k$ (scoring $\alpha_{i+2}$ points) is not occupied by an item in $\X \cup \Y$,
will result in a violation of tightness, because the sum will be strictly less than~\eqref{eq:claim} \footnote{For the case in which there is an index $j$ such that $\alpha_j= \alpha_i-2$, we denote by $\alpha_{i-2}=\alpha_i-2$, and the resulting sum would be $|\S|\alpha_{i-2}+M\alpha_i +(|\S|-M)\alpha_{i-1}+|\S|\alpha_i(2M-2)$. Then, we would reason that in every vote, the item that occupies the single position scoring $\alpha_{i-2}$ points, is in $\X \cup \Y$. Otherwise, some item $e$ in $\X\cup\Y$ scores more than $s^{max}_{\P}(e)$ points in $\ol{\P}$, and $c$ cannot be a co-winner.}. 

From the previous claim it follows that no item $h \in H$ can occupy position $i+l-1$ (scoring $\alpha_i$ points), because then, by tightness, this item would have to occupy position $i+l+k$ in at least one of the votes, contradicting the previous claim. Therefore, in all votes $\ol{\vote{p}_\s} \in \ol{\P}$ the items of $H$ occupy positions $i+l,\dots,i+l+k-1$ scoring $\alpha_{i+1}$ points.

We observe that for any triple $\s=(x,y,z)$, item $y$ must score either $\alpha_{i+1}$ or $\alpha_{i+2}$ in the extension $\ol{\vote{p}_\s}$. Otherwise, we immediately arrive at a violation of tightness with regard to $s^{max}_{\P}(y)$.
From this we conclude that for any triple $\s=(x,y,z)$, either item $x$ or $z$ occupy position $i+l-1$ (scoring $\alpha_i$ points) in $\ol{\vote{p}_\s}$.
Furthermore, every item $y \in \Y$ scores $\alpha_{i+2}$ points exactly once in $\ol{P}$. More than once, and $y$ scores more than $c$, and less would result in violation of tightness. 

By tightness, there exist precisely $M$ votes corresponding to $M$ triples $\S' \subseteq \S$ where item $z$ scores $\alpha_{i+1}$ points, freeing position $i+l-1$. As previously argued, only item $x$ can take this position, freeing position $i+l+k$ which can only be occupied by item $y$.
We now show that $\S'$ covers all items $\X \cup \Y \cup \Z$. Assume, by contradiction, that there is an item $x\in \X$ that is not covered by $\S'$. By the pigeon hole principle, there must be some item $x'\in \X$ that occupies position $i+l-1$, scoring $\alpha_i$ points, more than once. But then, we immediately arrive to a contradiction to tightness with regard to $s^{max}_{\P}(x')$ because $x'$ can score $\alpha_i$ exactly once for some triple in which it appears. If there is some item $y \in \Y$ that is not covered by $\S'$, then there must be some other item $y' \in \Y$ that scores $\alpha_{i+2}$ points more than once, making it the winner.
Overall, if $c$ is a co-winner in $\ol{\P}$ then the set of triples $\S'$, where $|\S'|=M$ cover all items $\X\cup \Y \cup \Z$, and is thus a $\threeDM$.
\end{proof}
The following corollary follows directly from Lemma~\ref{lem:3DMReduction2}.
\begin{corollary}\label{cor:unbounded}
Let $r=(\vec{\balpha}_m)_{m \in \mathbb{N}^+}$ be an unbounded-value voting rule (Definition~\ref{def:unbounded-value}) that produces a scoring vector with at least $4$ distinct values. Then the $\PW$ problem is NP-complete for $r$.
\end{corollary}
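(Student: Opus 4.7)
The plan is to derive the corollary as a direct instantiation of Lemma~\ref{lem:3DMReduction2} for the rule $r$. Given a $\threeDM$ instance $\I$ with $|\X|=|\Y|=|\Z|=M$, I would set $l = 3M - 2 = f(\I)$, which is polynomial in $|\I|$, and aim to choose a polynomially bounded number of candidates $m$ so that $\vec{\balpha}_m$ meets the two structural hypotheses of the lemma.

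First, I would invoke the unbounded-value property of $r$ with this value of $l$ to obtain a threshold $n_0$ such that for every $m\geq n_0$ the vector $\vec{\balpha}_m$ contains an index $i$ with $\alpha_i=\alpha_{i+1}=\cdots=\alpha_{i+l-1}$. Combining this with the fact that $\vec{\balpha}_m$ has at least $4$ distinct values for all sufficiently large $m$, I pick $m$ large enough (and polynomial in $|\I|$) so that both conditions hold simultaneously, fixing the value $\alpha_i$ on the equal block.

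Second, I would exploit the non-differentiating assumption to locate the index $j$ promised by the lemma. Since consecutive entries of $\vec{\balpha}_m$ differ by at most $1$ and the sequence is monotone, the distinct values of $\vec{\balpha}_m$ form a set of consecutive integers; having at least four of them guarantees that $\vec{\balpha}_m$ realizes some window $\{v,v+1,v+2,v+3\}$. A short case analysis on the location of $\alpha_i$ within this window shows that at least one of $v$ or $v+3$ is at absolute distance at least $2$ from $\alpha_i$, and the position of that value in $\vec{\balpha}_m$ serves as the desired $j$ with $\alpha_j \geq \alpha_i + 2$ or $\alpha_j \leq \alpha_i - 2$. Applying Lemma~\ref{lem:3DMReduction2} to this $\vec{\balpha}_m$ then yields the polynomial-time reduction from $\threeDM$; NP membership is routine, since a guessed linear extension of each partial vote is a short certificate verifiable in polynomial time.

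The main subtlety I expect is ensuring that the $m$ chosen above is polynomial in $|\I|$, which reduces to polynomial growth of the threshold $n_0$ from the unbounded-value definition as a function of $l$. This is implicit both in that definition and in Lemma~\ref{lem:3DMReduction2}'s requirement that $f$ be poly-type, so I would either inspect $r$ directly or appeal to the intended reading of unbounded-value. Aside from this, the argument is purely a matter of book-keeping on top of the lemma.
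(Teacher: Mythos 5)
Your proposal is correct and follows essentially the same route as the paper, which simply observes that the corollary is a direct instantiation of Lemma~\ref{lem:3DMReduction2}: the unbounded-value property supplies the length-$f(\I)$ block of equal scores, and the standing non-differentiating assumption together with the four distinct (hence consecutive) values yields the index $j$ with $|\alpha_j-\alpha_i|\geq 2$. Your explicit case analysis and your flag about the threshold $n_0$ needing to grow polynomially in $l$ are details the paper leaves implicit, but they do not change the argument.
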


\section{Putting it all together}
\begin{theorem}
Let $r=(\vec{\balpha}_m)_{m \in \mathbb{N}^+}$ be a voting rule that produces a scoring vector with at least $4$ distinct values. Then the $\PW$ problem is NP-complete for $r$.
\end{theorem}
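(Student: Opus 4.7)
The plan is to perform a case analysis on $r$ and invoke the reductions and external hardness result already in place. If $r$ is differentiating, the NP-completeness of $\PW$ over partitioned profiles follows from the Dey--Misra result cited earlier in this section, so the interesting regime is non-differentiating rules. For non-differentiating $r$, consecutive entries of $\vec{\balpha}_m$ differ by at most one; combined with the normalization $\alpha_1 = 0$ and the assumption of at least four distinct values, the distinct values of $\vec{\balpha}_m$ form a contiguous integer range that contains $\{0,1,2,3\}$ for all sufficiently large $m$, each value occupying a nonempty run.

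I would then split on whether $r$ is an unbounded-value voting rule. In the unbounded-value case, Corollary~\ref{cor:unbounded} immediately yields NP-completeness. Otherwise, because $r$ is pure and non-differentiating, inserting a position to pass from $\vec{\balpha}_{m-1}$ to $\vec{\balpha}_m$ must extend some existing run, so the maximum run length of $\vec{\balpha}_m$ is non-decreasing in $m$. A short pigeonhole argument then turns the failure of the unbounded-value condition into an absolute constant $l^\ast$ bounding the maximum run length of $\vec{\balpha}_m$ for every $m$; in particular the run of value $1$ in $\vec{\balpha}_m$ has length $k_1(m) \leq l^\ast$.

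In this remaining sub-case I would invoke Lemma~\ref{lem:3DMReduction} on the run of value $1$: set $i = k_0 + 1$, where $k_0 \geq 1$ is the length of the run of value $0$ (so $i \geq 2$), and set $k = k_1(m) \leq l^\ast$. The lemma's hypotheses all hold, since $\alpha_i = 1 > 0$ and $\alpha_{i+k} = 2 > \alpha_{i+k-1} = 1$. Given a $\threeDM$ instance $\I = (E,\S)$ with $|\X|=M$, I would pick $m$ to be any polynomial in $|\I|$ satisfying $m \geq 3M + |\S|(l^\ast - 1) + 1$, so that the padding sets $H$ (of size $|\S|(k-1) \leq |\S|(l^\ast - 1)$) and $D$ both fit inside $\candidates$, and the induced poly-type function $f(\I) = m - k$ is polynomial in $|\I|$. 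Combined with the differentiating case and the unbounded-value case, this exhausts all possibilities.

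The single nontrivial point to verify, and the one I would concentrate on, is this last sub-case: rules with an unbounded number of distinct values but only short runs, Borda being the canonical example. The potential danger is that if some relevant run were forced to grow with $m$, the padding $|H| = |\S|(k-1)$ could blow up with $|\S|$; but the bounded-max-run-length property inherited from the failure of the unbounded-value condition is exactly what keeps $|H|$ polynomial in $|\I|$, and thus keeps the whole reduction polynomial-time.
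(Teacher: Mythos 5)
Your proof is correct and follows essentially the same route as the paper's: split on whether $r$ is an unbounded-value rule, dispatch that case with Corollary~\ref{cor:unbounded}, and otherwise locate an interior value whose run is bounded and invoke Lemma~\ref{lem:3DMReduction}. Your additional care --- making the differentiating case explicit, using purity to convert the failure of Definition~\ref{def:unbounded-value} into a uniform run-length bound $l^\ast$, and fixing the run of value $1$ as the one fed to the lemma --- only fills in details that the paper's terser argument leaves implicit.
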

\begin{proof}
If $r$ is an unbounded-value voting rule (Definition~\ref{def:unbounded-value}) then by corollary~\ref{cor:unbounded} the $\PW$ problem is NP-complete for $r$.
Otherwise, $r$ produces a size-$m$ scoring vector $(\vec{\balpha}_m)$ such that every distinct value in $(\vec{\balpha}_m)$ occupies at most $B$ positions for some fixed $B$. In particular, this means that there is a value $a \in (\vec{\balpha}_m)$ where $a>0$, and $a < \alpha_m$ that occupies at most $B$ consecutive positions in $(\vec{\balpha}_m)$. 
But this means that there is an unbounded number of positions in $\vec{\balpha}_m$ outside the range of $a$. By Lemma~\ref{lem:3DMReduction}, the $\PW$ problem is NP-complete for $r$.
\end{proof}

\begin{theorem}
	Let $r=(\vec{\balpha}_m)_{m \in \mathbb{N}^+}$ be a $3$-valued voting rule. Then the $\PW$ problem is NP-complete for $r$ if it produces a size-$m$ scoring vector where the number of positions occupied by values $0$ or $2$ are unbounded.
\end{theorem}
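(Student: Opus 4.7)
The plan is to reduce from the 3-Dimensional Matching problem $\threeDM$ by invoking Lemma~\ref{lem:3DMReduction2}. Since this section has already restricted attention to non-differentiating rules and $r$ is a normalized $3$-valued scoring rule, the three distinct entries of $\vec{\balpha}_m$ must be the consecutive integers $\set{0,1,2}$, so for every sufficiently large $m$ the scoring vector takes the canonical shape $(\underbrace{2,\dots,2}_{k_2(m)},\underbrace{1,\dots,1}_{k_1(m)},\underbrace{0,\dots,0}_{k_0(m)})$. The hypothesis of the theorem splits into two symmetric branches---unbounded $k_0(m)$ or unbounded $k_2(m)$---corresponding to the two directions $\alpha_j\geq\alpha_i+2$ and $\alpha_j\leq\alpha_i-2$ of the gap condition in Lemma~\ref{lem:3DMReduction2}. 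I describe only the first; the second follows by the symmetric case already noted in that lemma's proof.

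Assume $k_0(m)$ is unbounded. Given a $\threeDM$ instance $\I=(E,\S)$ with $|E|=3M$, set the poly-type function $f(\I)\asn 3M-2$ and compute an $m$ for which $k_0(m)\geq 3M-2$. I then instantiate Lemma~\ref{lem:3DMReduction2} with $\alpha_i\asn 0$, plateau length $l\asn 3M-2$, and starting index $i\asn k_0(m)-3M+3$, so that the lemma's plateau coincides with the top $3M-2$ positions of the $0$-block; in that instantiation $k=k_1(m)$, $\alpha_{i+1}=1$ at the $k_1(m)$ positions of the $1$-block just above, and $\alpha_{i+2}=2$ at the first $2$-position above the $1$-block. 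The gap $\alpha_{i+2}-\alpha_i=2$ satisfies the required condition with equality. The surplus $k_0(m)-(3M-2)$ positions of value $0$ below the lemma plateau, together with the surplus $k_2(m)-1$ positions of value $2$ above the distinguished $\alpha_{i+2}$ position, are absorbed into the dummy set $D$ of size $|D|=k_0(m)+k_2(m)-3M$ exactly as in the proof of Lemma~\ref{lem:3DMReduction2}. Since $r$ is $3$-valued we have $k_2(m)\geq 1$ and hence enough room to choose each $\candidates_\s\subseteq D\cup\set{c}$ with $|\candidates_\s|=i-1=k_0(m)-3M+2$ while consistently fixing the position of $c$, either in the top partition when $k_2(m)\geq 2$ or inside $\candidates_\s$ when $k_2(m)=1$ (in the latter sub-case $c$ scores $0$ in every partial vote, and Lemma~\ref{lem:linearVotes} is used to adjust the fixed votes in $\Q$ accordingly).

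The main obstacle is the step of producing $m$ polynomial in $|\I|$: strictly from ``$k_0(m)$ unbounded'' one only gets the existence of such $m$, so for a genuine polynomial-time reduction one needs the growth of $k_0(m)$ in $m$ to be at worst polynomial---the same regularity assumption already implicit in Corollary~\ref{cor:unbounded} and in the definition of an unbounded-value voting rule (Definition~\ref{def:unbounded-value}). Once such $m$ is in hand, every quantity in the Lemma~\ref{lem:3DMReduction2} construction, namely $|H|=k_1(m)-1$, $|D|=k_0(m)+k_2(m)-3M$, and $|\candidates_\s|=k_0(m)-3M+2$, is polynomial in $|\I|$, so the reduction is polynomial-time. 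The case of unbounded $k_2(m)$ is handled identically by applying Lemma~\ref{lem:3DMReduction2} symmetrically with plateau value $\alpha_i=2$, using any $0$-position as the witness $\alpha_j\leq\alpha_i-2$, and swapping the roles of the top and bottom blocks throughout the construction.
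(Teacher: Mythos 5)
Your proposal is correct and follows essentially the same route as the paper: the paper proves this theorem in one line as a direct corollary of Lemma~\ref{lem:3DMReduction2}, instantiated exactly as you describe (plateau of length $3M-2$ inside the unbounded $0$-block with witness $\alpha_j=2$, or symmetrically inside the $2$-block with witness $\alpha_j=0$). Your additional bookkeeping of the indices, the sizes of $H$, $D$, and $\candidates_\s$, and the remark about needing $k_0(m)$ (resp.\ $k_2(m)$) to grow effectively in $m$ are all consistent with, and somewhat more careful than, the paper's treatment.
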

\begin{proof}
This is a direct corollary of Lemma~\ref{lem:3DMReduction2}.
\end{proof}

%% The file named.bst isThe rest of the candidates are denoted  a bibliography style file for BibTeX 0.99c
%\newpage
\bibliographystyle{named}
\bibliography{main}

%\newpage

\section*{Appendix}
\begin{replemma}{\ref{lem:fixcscore}}
\lemmafixcscore
\end{replemma} 
\begin{proof}
	\textbf{The If direction.} Assume that $c$ is a possible winner (co-winner) in $\I^c=(\candidates,\O^c,r)$ where $r$ is a positional voting rule. Then there exists a complete profile $\V$ that is an extension of $\O^c$ where $c$ is the winner (co-winner). But since, by definition, $\lin(\O^c)\subseteq \lin(\O)$, then $\V \in \lin(\O)$, making $c$ a possible winner (co-winner) in $\I$ as well.\\
	\textbf{The Only-If direction.}
	Let $\V=(v_1,\dots,v_n)\in \lin(\O)$ denote the complete profile in which $c$ is the winner (co-winner) of $\I=(\candidates,\O,r)$. 
	If $\V \in \lin(\O^c)$ then we are done. Otherwise, we show how we can transform $\V$ into a complete profile $\V'\in \lin(\O^c)$ where $c$ is the possible winner (co-winner).
	
	Consider any vote $v_i \in \V$ where $v_i \notin \lin(o^c_i)$.
	Since $\O$ is a partitioned profile, then $o_i$ is a partitioned preference. Let $o_i=(A_1 \succ A_2 \succ \dots \succ A_q)$ where 
	the $A_j$s are disjoint partitions.	
	Let $A_l$ ($l \leq q$) be the partition that contains the distinguished candidate $c$ (i.e., $c \in A_l$).
	Since, by our assumption, $o_i$ is a partitioned preference then in any extension $v_i$ of $o_i$, candidate $c$ resides in the range $\set{\sum_{j=1}^{l-1}|A_j|+1,\dots,\sum_{j=1}^{l}|A_j|}$. In particular, the position that maximizes the score of $c$ is $\sum_{j=1}^{l-1}|A_j|+1$. Let $c'$ be the candidate occupying this position in $v_i$. If $c'=c$ then $v_i \in \lin(o^c_i)$ contradicting our assumption. Thus $c' \neq c$.	
	Since all candidates positioned in the range $\set{\sum_{j=1}^{l-1}|A_j|+1,\dots,\sum_{j=1}^{l}|A_j|}$ belong to the partition $A_l$, and in particular the candidate $c'$, then they are also incomparable in $o_i$. Therefore, $c$ and $c'$ are incomparable and can be swapped in $v_i$ while still complying to the partial order $o_i$. 
	
	Let us denote by $v_i'$ the ranking that results from swapping the positions of $c$ and $c'$ in $v_i$. Clearly, we have that (1) $s(v_i,c) \leq s(v'_i,c)$, (2) $s(v_i,c') \geq s(v_i',c')$, and (3) for every $c''\notin \set{c,c'}$ it is the case that $s(v_i,c'')=s(v_i',c'')$.
	
	In the profile $\V'=(v_1,\dots,v_i',\dots,v_n)$ the candidate $c$ is still a winner (co-winner) because its score has increased while the score of any other candidate has not. We repeat this procedure for every vote $v_i$ ($i\in \set{1,\dots,n}$) until we reach a profile that is consistent with $\O^c$ as required.
\end{proof}

For every candidate $a \in \candidates$ we denote by $\votes{a}{1}{\O}$ the set of votes $o\in \O$ in which $\smax(o,a)=1$. Formally: $\votes{a}{1}{\O}=\set{o_i \in \O \mid \smax(o_i,a)=1}$. Clearly $\votes{a}{1}{\O}$ can be computed in polynomial time for every candidate $a\in \candidates$ (even if the profile is not partitioned).
We summarize the winning (co-winning) requirement in the following lemma.
\begin{replemma}{\ref{lem:cWinnerKApproval}}
	\kApprovalLemma
\end{replemma} 
\begin{proof}
	By Lemma~\ref{lem:fixcscore}, candidate $c$ is a PW in a partitioned profile $\O$ if and only if $c$ is a PW in $\I^c=(\candidates, \O^c, r)$.
	Now, $c$ is a PW in $\I^c$ if and only if there exists a complete profile $\V^c \in \lin(\O^c)$ where $s(\V^c,c)>s(\V^c,c')$ for every candidate $c' \neq c$.	
	Note that since $r$ is $k$-approval, we have that:
	\[
	\left|\votes{c'}{1}{\O}\right|=\smax(\O,c')
	\]
	Therefore we can express $s(\V^c,c')$ as follows.
	\begin{align*}
	s(\V^c,c')&=|\votes{c'}{1}{\V^c}|\\
	&\underbrace{=}_{\substack{\votes{c'}{1}{\O}\supseteq \\ \votes{c'}{1}{\V^c}}}|\votes{c'}{1}{\O}\setminus \votes{c'}{0}{\V^c}|\\
	&=|\votes{c'}{1}{\O}|-|\votes{c'}{1}{\O}\cap \votes{c'}{0}{\V^c}|\\
	&=\smax(\O,c')-|\votes{c'}{1}{\O}\cap \votes{c'}{0}{\V^c}|
	\end{align*}
	Candidate $c$ is the winner in $\V^c$ if and only if $s(\V^c,c)>s(\V^c,c')$ for any other candidate $c'$. By definition of $\O^c$ we have that $s(\V^c,c)=\smax(\O,c)$.
	Therefore, $c$ is the winner in $\V^c$ if and only if $\smax(\O,c) > s(\V^c,c')$, that is:
	\[
	\smax(\O,c) > \smax(\O,c')-|\votes{c'}{1}{\O}\cap \votes{c'}{0}{\V^c}|
	\]
	or
	\[
	|\votes{c'}{1}{\O}\cap \votes{c'}{0}{\V^c}|>
	\smax(\O,c')- \smax(\O,c)  
	\]
	as required.
\end{proof}

\end{document}